\pgfplotsset{compat=1.16}
\newtheorem{theorem}{Theorem}
\newtheorem{lemma}{Lemma}
\newtheorem{definition}{Definition}
\newtheorem{remark}{Remark}
\DeclareMathAlphabet{\mathpzc}{OT1}{pzc}{m}{it}
\newcommand{\ta}[1]{}
\newcommand{\hs}[1]{}
\title{A Fast Algorithm for Finding Minimum Weight Cycles in Mining Cyclic Graph Topologies}
\author{Heman Shakeri\textsuperscript{1,*}  \and Torben Amtoft\textsuperscript{3} \and Behnaz Moradi-Jamei\textsuperscript{2} \and Nathan Albin\textsuperscript{4} \and  Pietro Poggi-Corradini\textsuperscript{4}}
\date{}
\begin{document}
\maketitle

\begin{abstract}
Cyclic structures are fundamental topological features in graphs, playing critical roles in network robustness, information flow, community structure, and various dynamic processes. Algorithmic tools that can efficiently probe and analyze these cyclic topologies are increasingly vital for tasks in graph mining, network optimization, bioinformatics, and social network analysis. A core primitive for quantitative analysis of cycles is finding the Minimum Weight Cycle (MWC), representing the shortest cyclic path in a weighted graph. However, computing the MWC efficiently remains a challenge, particularly compared to shortest path computations.
This paper introduces a novel deterministic algorithm for finding the MWC in general weighted graphs (\href{https://github.com/Shakeri-Lab/girth}{github.com/Shakeri-Lab/girth}). Our approach adapts the structure of Dijkstra's algorithm by introducing and minimizing a \textit{composite distance} metric, effectively translating the global cycle search into an iterative node-centric optimization. We provide a rigorous proof of correctness based on loop invariants.
We detail two mechanisms for accelerating the search: a provable node discarding technique based on intermediate results, and a highly effective graph pruning heuristic. This heuristic dynamically restricts the search to relevant subgraphs, leveraging the principle of locality often present in complex networks to achieve significant empirical speedups, while periodic resets ensure global optimality is maintained.
The efficiency of the proposed MWC algorithm enables its use as a core component in more complex analyses focused on cyclic properties. We illustrate this through a detailed application case study: accelerating the computation of the Loop Modulus, a measure of cycle richness used in advanced network characterization (\href{https://github.com/Shakeri-Lab/loop\_modulus}{github.com/Shakeri-Lab/loop\_modulus}). Our algorithm dramatically reduces the runtime of the iterative constraint-finding bottleneck in this computation.

\end{abstract}

\noindent\textsuperscript{1}School of Data Science, University of Virginia\\
\textsuperscript{2}Department of Mathematics and Statistics, University of Kansas\\
\textsuperscript{3}Department of Computer Science, Kansas State University\\
\textsuperscript{4}Department of Mathematics, Kansas State University\\
\textsuperscript{*}Corresponding author: \href{mailto:hs9hd@virginia.edu}{hs9hd@virginia.edu}

\section{Problem Statement and Framework}
\label{sec:intro_framework}

Cyclic structures permeate real-world networks and abstract graph representations, influencing phenomena ranging from network resilience and transport efficiency to the formation of communities and the behavior of dynamic systems \cite{diestel2000graph}. Effectively analyzing these cyclic topologies requires robust algorithmic primitives. A cornerstone primitive is the computation of the Minimum Weight Cycle (MWC), often referred to as the (weighted) girth of the graph \cite{harary1969graph}. Finding the MWC is fundamental not only in graph theory itself but also serves as a building block for applications such as computing minimal cycle bases \cite{gleiss2001circuit, kavitha2004faster, kavitha2007new}, analyzing cycle packing problems \cite{caprara2003packing, krivelevich2005approximation, salavatipour2005disjoint}, understanding graph connectivity and chromatic properties \cite{diestel2000graph, djidjev2000computing}, and even solving related problems like min-cut in planar graph duals \cite{sankowski2011min}. Furthermore, efficient MWC computation is crucial for network analysis techniques like Loop Modulus, which quantifies the richness of cycle families \cite{shakeri2016modulus}.

Despite its importance, finding the MWC in general weighted undirected graphs algorithmically is considerably more challenging than finding shortest paths between vertex pairs. Let $G=(V, E, w)$ be an undirected graph with vertex set $V$, edge set $E$, and a positive edge weight function $w: E \to \mathbb{R}_{> 0}$. A \textit{path} $\pi$ is a sequence of distinct vertices $v_0, v_1, \dots, v_r$ such that $(v_{i-1}, v_i) \in E$ for all $i=1, \dots, r$. A \textit{simple cycle} $c$ is a sequence of vertices $v_0, v_1, \dots, v_r$ where $v_0 = v_r$, $r \ge 3$, and $v_1, \dots, v_r$ are distinct. We denote the set of vertices in a path or cycle $\pi$ as $\text{vertices}(\pi)$. The \textit{length} of a path $\pi = (v_0, \dots, v_r)$ or cycle $c = (v_0, \dots, v_r)$ is the sum of its edge weights:
\begin{equation}
\label{eq:length_def}
\ell(\pi) := \sum_{i=1}^r w(v_{i-1}, v_i).
\end{equation}
The \textit{shortest path distance} $d(x, y)$ between vertices $x, y \in V$ is the minimum length $\ell(\pi)$ over all paths $\pi$ from $x$ to $y$. If no path exists, $d(x, y) = \infty$. The Minimum Weight Cycle (MWC) problem seeks to find a simple cycle $c^*$ in $G$ such that $\ell(c^*) = \min_{c \in \mathscr{C}} \ell(c)$, where $\mathscr{C}$ is the set of all simple cycles in $G$. We assume $G$ is not a tree (or a forest), so $\mathscr{C}$ is non-empty.

\paragraph{Existing Approaches and Challenges}
Significant effort has been invested in developing algorithms for the MWC problem. For unweighted graphs, Itai and Rodeh \cite{itai1978finding} presented algorithms with subcubic time complexity related to matrix multiplication, though the weighted case was left open. Subsequent work extended these ideas to integer weights \cite{roditty2011minimum} and established connections between MWC and other challenging graph problems \cite{williams2010subcubic}. Approximation algorithms \cite{itai1978finding, Roditty20111446, lingas2009efficient, yuster2011shortest, peleg2012distributed}, randomized algorithms \cite{yuster2011shortest}, and specialized algorithms for finding even-length cycles \cite{yuster1997finding} have also been developed.
A common strategy employed by many deterministic MWC algorithms for weighted graphs involves searching for the shortest cycle passing through a specific vertex or edge.
The standard baseline, often called the \textbf{Rooted Girth Algorithm}, iterates through every edge $e=(u,v) \in E$. For each edge, it computes the shortest path distance $d_{G\setminus e}(u,v)$ in the graph without edge $e$. The minimum of $d_{G\setminus e}(u,v) + w(e)$ over all edges is the MWC. This requires $|E|$ shortest path computations.

Alternatively, one can adapt all-pairs shortest path algorithms or run multiple shortest path searches (like Dijkstra's) from each vertex $v$, looking for the shortest path between neighbors $u, z$ of $v$ in the graph $G \setminus \{v\}$. Using efficient implementations (e.g., Fibonacci heaps), this leads to overall complexities like $O(|V||E| + |V|^2 \log |V|)$ or related bounds \cite{itai1978finding, yuster2011shortest, orlin2016nm}.
While effective, these approaches repeatedly explore large portions of the graph.

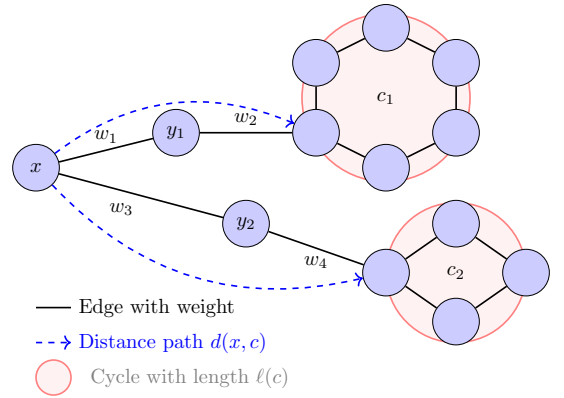
\begin{wrapfigure}[21]{r}{0.4\textwidth}
\vspace{-14pt}
	\centering
    \resizebox{0.4\textwidth}{!}{
\begin{tikzpicture}[
    scale=1.2,
    vertex/.style={circle, draw, fill=blue!20, minimum size=8mm},
    edge/.style={thick, -},
    cycle/.style={thick, draw=red, fill=red!10, opacity=0.5},
    distancepath/.style={->, dashed, thick, blue},
    label/.style={font=\small}
]
    \node[vertex] (x) at (0,0) {$x$};
    
    \filldraw[cycle] (5,1) circle (1.2cm);   
    \filldraw[cycle] (6,-1.5) circle (1cm);  
    
    \node[vertex] (c1) at (4,1.5) {};
    \node[vertex] (c2) at (5,2) {};
    \node[vertex] (c3) at (6,1.5) {};
    \node[vertex] (c4) at (6,0.5) {};
    \node[vertex] (c5) at (5,0) {};
    \node[vertex] (c6) at (4,0.5) {};
    
    \node[vertex] (d1) at (5,-1.5) {};
    \node[vertex] (d2) at (6,-0.8) {};
    \node[vertex] (d3) at (7,-1.5) {};
    \node[vertex] (d4) at (6,-2.2) {};
    
    \node[vertex] (y1) at (2,0.5) {$y_1$};
    \node[vertex] (y2) at (3,-0.8) {$y_2$};
    
    \draw[edge] (x) -- (y1) node[midway, above] {$w_1$};
    \draw[edge] (y1) -- (c6) node[midway, above] {$w_2$};
    \draw[edge] (x) -- (y2) node[midway, below left] {$w_3$};
    \draw[edge] (y2) -- (d1) node[midway, below] {$w_4$};
    
    \draw[edge] (c1) -- (c2) -- (c3) -- (c4) -- (c5) -- (c6) -- (c1) -- cycle;
    
    \draw[edge] (d1) -- (d2) -- (d3) -- (d4) -- (d1) -- cycle;
    
    \draw[distancepath] (x) to[bend left] (c6);  
    \draw[distancepath] (x) to[bend right] (d1); 
    
    \node at (5,1) {$c_1$};
    \node at (6,-1.5) {$c_2$};
    
    \begin{scope}[shift={(0,-2)}]
        \draw[edge] (0,0) -- (0.5,0) node[right] {Edge with weight};
        \draw[distancepath] (0,-0.5) -- (0.5,-0.5) node[right] {Distance path $d(x,c)$};
        \draw[cycle, fill=red!10] (0.25,-1) circle (0.25) node[right=0.5cm] {Cycle with length $\ell(c)$};
    \end{scope}
\end{tikzpicture}}
    \vspace{-13pt}
	\caption{Composite distance  $d^+(x,c)$ is the shortest path distance from vertex $x$ to cycle $c$ plus $c$'s length $\ell(c)$. The plot shows cycles $c_1$ and $c_2$ with edge weights. The Minimum Weight Cycle (MWC) length is $\ell(c^*) = \min_{x \in V} \min_{c \in \mathscr{C}} d^+(x,c)$.}
    \label{fig:composite_distance}
\end{wrapfigure}

\subsection{An Alternative View: Composite Distance Minimization}
\label{subsec:composite_distance}

In contrast to vertex- or edge-rooted searches, this paper introduces a different perspective based on minimizing a metric that couples shortest path distances with cycle lengths. This allows for a vertex-centric iterative approach inspired by Dijkstra's algorithm.

\begin{definition}[Composite Distance]\label{def:composite_distance}
For a vertex $x \in V$ and a cycle $c \in \mathscr{C}$, the distance from $x$ to $c$ is $d(x,c) = \min_{y \in \text{vertices}(c)} d(x,y)$. The \textit{composite distance} from vertex $x$ to cycle $c$ is defined as:
\begin{equation}\label{eq:composite}
d^+(x,c) = d(x,c) + \ell(c).
\end{equation}
The composite distance from vertex $x$ to the family of all cycles $\mathscr{C}$ is:
\begin{equation}\label{eq:closestC}
d^+(x) = \min_{c\in \mathscr{C}} d^+(x,c).
\end{equation}
\end{definition}

This definition shifts the focus from finding a cycle directly to finding a vertex $x$ that minimizes this composite distance $d^+(x)$. The following theorem establishes the equivalence between minimizing $d^+(x)$ and finding the MWC.

\begin{theorem}[Equivalence to MWC]\label{thm:minmin_equiv}
	Minimizing $d^+(x)$ over $x\in V$ is equivalent to finding the length of the minimum weight cycle in $G$. Moreover, the minimum value $\min_{x \in V} d^+(x)$ equals $\ell(c^*)$ for any MWC $c^*$, and this minimum is attained for any vertex $x \in \text{vertices}(c^*)$.
\end{theorem}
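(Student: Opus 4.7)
The plan is to prove the identity $\min_{x \in V} d^+(x) = \ell(c^*)$ by establishing two matching inequalities, each of which is essentially immediate from the definitions.

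First, I would establish the upper bound $\min_{x \in V} d^+(x) \le \ell(c^*)$. Pick any MWC $c^*$ and any vertex $x \in \text{vertices}(c^*)$ (such a vertex exists because $\mathscr{C}$ is non-empty by assumption). Since $x$ lies on $c^*$, the trivial zero-length path from $x$ to itself shows $d(x, c^*) = \min_{y \in \text{vertices}(c^*)} d(x, y) = 0$. Therefore
\begin{equation*}
d^+(x) \le d^+(x, c^*) = d(x, c^*) + \ell(c^*) = \ell(c^*),
\end{equation*}
and taking the minimum over $x \in V$ preserves the inequality.

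Next I would establish the lower bound $\min_{x \in V} d^+(x) \ge \ell(c^*)$. For any vertex $x \in V$ and any cycle $c \in \mathscr{C}$, edge weights are positive so $d(x, c) \ge 0$, giving $d^+(x, c) = d(x, c) + \ell(c) \ge \ell(c) \ge \ell(c^*)$, where the last inequality is the defining property of the MWC. Taking the minimum over $c \in \mathscr{C}$ yields $d^+(x) \ge \ell(c^*)$ for every $x \in V$, and then taking the minimum over $x$ gives the lower bound.

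Combining the two bounds gives $\min_{x \in V} d^+(x) = \ell(c^*)$, which also proves the equivalence claim. The attainment statement follows from the upper bound argument: for any $x \in \text{vertices}(c^*)$ we showed $d^+(x) \le \ell(c^*)$, and the lower bound forces equality. There is essentially no obstacle here; the only subtlety worth noting is that $d(x, c)$ is well-defined (the minimum is taken over a non-empty finite set of vertices) and nonnegative because of the positive-weight assumption, which is what makes the chain $d^+(x, c) \ge \ell(c) \ge \ell(c^*)$ go through. No loop-invariant machinery or algorithmic argument is needed; the statement is a pure reformulation of the MWC objective.
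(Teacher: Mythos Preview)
Your proposal is correct and follows essentially the same approach as the paper: both arguments use nonnegativity of $d(x,c)$ to get $d^+(y,c) \ge \ell(c) \ge \ell(c^*)$ for the lower bound, and evaluate $d^+(x^*, c^*) = \ell(c^*)$ at a vertex $x^*$ on an MWC for the matching upper bound. Your write-up is slightly more explicit in separating the two inequalities, but the logic is identical.
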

\begin{proof}
	For any $x\in V$ and $c\in \mathscr{C}$, $d^+(x,c) = d(x,c) + \ell(c) \ge \ell(c)$ since $d(x,c) \ge 0$. Equality $d^+(x,c) = \ell(c)$ holds if and only if $d(x,c) = 0$, i.e. $x \in \text{vertices}(c)$.
     Let $c^*$ be a Minimum Weight Cycle. For any vertex $x^* \in \text{vertices}(c^*)$, the distance to the cycle is zero, so its composite distance is $d^+(x^*, c^*) = \ell(c^*)$. For any arbitrary vertex $y \in V$ and any cycle $c \in \mathscr{C}$, the composite distance $d^+(y,c) = d(y,c) + \ell(c) \ge \ell(c) \ge \ell(c^*)$.
\end{proof}

This shows that the global minimum value of the composite distance is precisely $\ell(c^*)$. Therefore, the overall minimum $\min_{x\in V} d^+(x)$ is equal to $\ell(c^*)$, and this minimum is achieved for any vertex on a Minimum Weight Cycle.
In other words, the MWC length (girth) of the graph is found by solving the following nested minimization problem (depicted in Figure~\ref{fig:composite_distance}):
\begin{equation}\label{eq_girth_thm}
\ell(c^*) = \min_{x\in V} d^+(x) = \min_{x\in V}\left( \min_{c\in \mathscr{C}} d^+(x,c)\right).
\end{equation}
Therefore, the proposed algorithm enumerates the vertices $V=\{1, 2, \dots, N\}$ and iteratively minimizes $d^+(x)$ for each vertex $x$. Information gathered during the search for $d^+(j)$ can be used to accelerate the subsequent search for $d^+(j+1)$. Specifically, the current best MWC length estimate can serve as a cut-off. 
This acceleration can be taken a step further. The following theorem provides a criterion for discarding certain vertices from future consideration entirely.

\begin{theorem}[Vertex Discarding Criterion]\label{thm:exclude_vertices}
Suppose for some vertex $j \in V$, the minimum composite distance is $d^+(j) = d(j,c) + \ell(c)$ for some cycle $c \in \mathscr{C}$. If $c$ is not a shortest cycle (i.e., a $c'$ exists with $\ell(c') < \ell(c)$), then any vertex $z$ processed after $j$ satisfying $d(j,z) \le d(j,c)$ cannot belong to any $c^*$ that is a MWC. Such vertices $z$ can thus be safely discarded from subsequent searches for the global MWC.
\end{theorem}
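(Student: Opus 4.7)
The plan is a proof by contradiction exploiting the minimality of $d^+(j)$ together with the triangle inequality for distances to a cycle. Assume the conclusion fails, so there exists a vertex $z$ with $d(j,z) \le d(j,c)$ that lies on some MWC $c^*$, necessarily satisfying $\ell(c^*) < \ell(c)$ by the hypothesis that $c$ is not a shortest cycle.

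The key observation is that because $z \in \text{vertices}(c^*)$, we have $d(j, c^*) \le d(j,z)$ by Definition~\ref{def:composite_distance}, and hence
\begin{equation*}
d^+(j, c^*) \;=\; d(j, c^*) + \ell(c^*) \;\le\; d(j,z) + \ell(c^*).
\end{equation*}
Since $d^+(j) = \min_{c' \in \mathscr{C}} d^+(j,c')$, the value $d^+(j) = d(j,c) + \ell(c)$ is bounded above by $d^+(j,c^*)$, giving the chain
\begin{equation*}
d(j,c) + \ell(c) \;\le\; d(j,z) + \ell(c^*) \;\le\; d(j,c) + \ell(c^*),
\end{equation*}
where the last step uses the hypothesis $d(j,z) \le d(j,c)$. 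Cancelling $d(j,c)$ on both sides forces $\ell(c) \le \ell(c^*)$, contradicting $\ell(c^*) < \ell(c)$.

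The argument is quite short once the right bound on $d^+(j,c^*)$ is written down. The only conceptual step that needs care is recognizing that the assumption $z \in \text{vertices}(c^*)$ is exactly what lets us estimate the distance from $j$ to the cycle $c^*$ by the distance from $j$ to the single vertex $z$; this is where the hypothesis $d(j,z) \le d(j,c)$ enters and is the true content of the discarding rule. No stronger assumption on shortest-path structure, on the algorithm's current state, or on the identity of $c$ is needed, so the statement is purely a structural consequence of the composite-distance definition together with Theorem~\ref{thm:minmin_equiv}.
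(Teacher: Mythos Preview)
Your proof is correct and follows essentially the same approach as the paper: both use the minimality of $d^+(j)$ to compare $d^+(j,c)$ against $d^+(j,c^*)$, together with $d(j,c^*)\le d(j,z)$ from $z\in\text{vertices}(c^*)$, to reach a contradiction. The only cosmetic difference is that the paper cancels $\ell(c)$ first to obtain $d(j,c)<d(j,c^*)$ and then derives the contradiction, whereas you chain the inequalities and cancel $d(j,c)$ to get $\ell(c)\le\ell(c^*)$ directly.
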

\begin{proof}
	Let $c^*$ be any MWC, then for a non MWC $c$, we have $\ell(c^*) < \ell(c)$. By the definition of $d^+(j)$ (Equation \ref{eq:closestC}), we also know $d^+(j) \le d^+(j,c^*)$. The premise assumes $d^+(j,c)$ is the minimum found associated with $j$, so $d^+(j) = d^+(j,c)$. Combining these gives:
	\begin{align*}
	d(j,c) + \ell(c) = d^+(j,c) = d^+(j) &\le d^+(j,c^*) \\
	&= d(j,c^*) + \ell(c^*) \\
	&< d(j,c^*) + \ell(c) \quad (\text{since } \ell(c^*) < \ell(c)).
	\end{align*}
	Comparing the start and end of this inequality chain and cancelling $\ell(c)$ yields:
	$$
	d(j,c) < d(j,c^*).
	$$
	Now consider any vertex $z$ such that $d(j,z) \le d(j,c)$. If $z$ were part of the MWC $c^*$, then by the definition of $d(j,c^*)$ (as the minimum distance from $j$ to any vertex in $c^*$), we would have $d(j,c^*) \le d(j,z)$. Combining these inequalities gives $d(j,c^*) \le d(j,z) \le d(j,c)$. However, this contradicts the established result $d(j,c) < d(j,c^*)$. Therefore, the initial assumption must be false: $z$ cannot belong to $c^*$.
\end{proof}
This composite distance framework, incorporating Theorem \ref{thm:minmin_equiv} and Theorem \ref{thm:exclude_vertices}, forms the basis of the Dijkstra-like algorithm detailed in Section \ref{sec_alg}.
The remainder of this paper is organized as follows. Section \ref{sec:complexity} discusses the algorithm's complexity, introduces the graph pruning heuristic for practical performance enhancement, and provides illustrative examples. Section \ref{sec:loop_modulus_app} demonstrates the application of the algorithm to accelerate Loop Modulus computations. Finally, Section \ref{sec:conclusion} offers concluding remarks.


\section{An Iterative Algorithm based on Composite Distance}
\label{sec_alg}

Building upon the composite distance framework, we now present a deterministic algorithm designed to find MWC by iteratively minimizing $d^+(x)$ for each vertex $x \in V$. The algorithm adapts the core mechanics of Dijkstra's shortest path algorithm. For each starting vertex $x$, it grows a shortest path tree, but incorporates specific checks to identify cycles and update the global minimum cycle length found so far. Crucially, it includes an optimization based on the current shortest cycle estimate to potentially terminate the inner search early, aiming to improve performance over exhaustive searches.

\subsection{Algorithm Description}
\label{subsec:algo_desc}
The algorithm proceeds with an outer loop iterating through all potential starting vertices $x \in V$. The globally shortest cycle length found across all iterations is stored. Inside this loop, a modified Dijkstra-like search is performed starting from $x$.

\paragraph{Data Structures} The primary data structure maintained across the outer loop iterations is:
\begin{itemize}
    \item $\gamma$: Records the length of the shortest simple cycle found globally so far. Initially $\gamma = \infty$. It is updated whenever a shorter cycle is discovered. (The algorithm can be easily modified to store the cycle itself, not just its length).
    \item $V_{\text{active}}$: The set of vertices that have not been discarded and are still candidates for belonging to the MWC. Initially $V_{\text{active}} = V$.
\end{itemize}
The inner loop (the modified Dijkstra search starting from $x$) utilizes the following:
\begin{itemize}
    \item $Q$: The set of vertices for which the shortest path distance from $x$ has been finalized in the current inner iteration. Initially $Q = \emptyset$.
    \item $\delta$: A function mapping $V \to \mathbb{R}_{\ge 0} \cup \{\infty\}$. $\delta(y)$ stores the current upper bound on the shortest path distance $d(x,y)$. If $y \in Q$, then $\delta(y) = d(x,y)$. Initially, $\delta(x) = 0$ and $\delta(y) = \infty$ for $y \neq x$.
    \item $\texttt{pred}$: A partial function mapping vertices $v \in Q \setminus \{x\}$ to their predecessor $u = \texttt{pred}(v)$ on the shortest path from $x$ found so far. $u$ must be in $Q$. We use $\texttt{pred}^k$ for $k$ applications, $\texttt{pred}^+(v)$ for the set of all predecessors (excluding $v$), and $\texttt{pred}^*(v)$ for $\texttt{pred}^+(v) \cup \{v\}$.
    \item $d^+_{\min}$, $d_{\text{to\_cycle}}$, $\ell_{\text{best}}$: Variables tracking the minimum composite distance found from the current root $x$, the distance to the corresponding cycle, and its length.
\end{itemize}

The detailed procedure is presented in Algorithm \ref{alg:MWC_Dijkstra}. 

\paragraph{Computing the Distance to a Cycle}
A key quantity needed for the vertex discarding optimization is the distance $d(x,c)$ from the current root $x$ to a detected cycle $c$. When a cycle is detected via the edge $(y,z)$, the cycle $c$ consists of the paths in the shortest path tree from the LCA $p$ to $y$ and $z$, plus the edge $(y,z)$.

Since $p$ lies on the shortest path from $x$ to both $y$ and $z$, the closest vertex on cycle $c$ to the root $x$ is the LCA $p$ itself. Therefore:
\begin{equation}\label{eq:dist_to_cycle}
d(x,c) = \delta(p).
\end{equation}
This observation is crucial: we can compute the distance to any detected cycle directly from the $\delta$ value of its LCA.

The \textit{composite distance} from $x$ to cycle $c$ (Definition~\ref{def:composite_distance}) then becomes:
\begin{equation}\label{eq:composite_explicit}
d^+(x,c) = d(x,c) + \ell(c) = \delta(p) + [\delta(y) + \delta(z) + w(y,z) - 2\delta(p)] = \delta(y) + \delta(z) + w(y,z) - \delta(p).
\end{equation}

This algorithm augments the standard Dijkstra approach in three key ways:
\begin{enumerate}
    \item \textbf{Cycle Detection and Composite Distance Tracking:} When considering an edge $(y, z)$ where $y$ is newly added to $Q$ and $z$ is already in $Q$, a potential cycle is formed if $z$ is not the direct predecessor of $y$. The algorithm calculates both the cycle length $\ell(c)$ and the distance $d(x,c) = \delta(p)$ to this cycle, where $p$ is the Lowest Common Ancestor\footnote{The efficient computation of the LCA is a standard problem in algorithmic graph theory \cite{OnlineLCA} and can be handled using various techniques, often with logarithmic or near-constant time complexity per query after some preprocessing.} of $y$ and $z$ in the current shortest path tree rooted at $x$. The algorithm tracks the cycle achieving the minimum composite distance $d^+(x,c)$ found so far (Lines 24-26 in Algorithm \ref{alg:MWC_Dijkstra}) and updates the global minimum cycle length $\gamma$ (Line 23).
    
    \item \textbf{Vertex Discarding:} After the inner loop completes, the algorithm applies the criterion from Theorem~\ref{thm:exclude_vertices}. This is only applied if the cycle $c$ achieving $d^+_{\min}$ is not the MWC ($\ell_{\text{best}} > \gamma$) AND if we can guarantee $d^+_{\min}$ is the true $d^+(x)$ despite the $\gamma/2$ cutoff. This guarantee is provided by the condition $d^+_{\min} < 3\gamma/2$ (Line 30). If satisfied, any unprocessed vertex $z$ satisfying $\delta(z) \leq d_{\text{to\_cycle}}$ cannot belong to any MWC and is removed from $V_{\text{active}}$ (Lines 30-36).
    
    \item \textbf{Optimized Termination Condition:} The standard Dijkstra continues until all reachable vertices are in $Q$. This algorithm incorporates a check (Line 9): the inner `while' loop only continues exploring vertices $v \notin Q$ if their current distance estimate $\delta(v)$ is less than $\gamma/2$. This condition is crucial for efficiency and its validity is proven in the correctness analysis (Lemma \ref{lem:outer}). It stems from the observation that any potential shorter cycle involving an unexplored vertex $v$ must necessarily consist of two paths from $x$ to $v$ (one potentially being just the edge closing the cycle), and if $v$ is too far from $x$, it cannot contribute to a cycle shorter than $\gamma$.
\end{enumerate}

\begin{figure}[htbp]
    \centering
    \begin{minipage}{0.95\linewidth}
        \begin{algorithm}[H]
            \caption{MWC Algorithm based on Composite Distance Minimization}
            \label{alg:MWC_Dijkstra}
            \begin{algorithmic}[1]
            \algnotext{EndFor}
            \algnotext{EndIf}
            \algnotext{EndWhile}
                \State $\gamma \gets \infty$ \Comment{Shortest cycle length found so far}
                \State $V_{\text{active}} \gets V$ \Comment{Vertices not yet discarded}
                \ForAll{vertex $x \in V_{\text{active}}$}
                    \State Initialize $\delta(v) \gets \infty$ for all $v \in V$; $\texttt{pred}(v)$ undefined for all $v \in V$
                    \State $Q \leftarrow \emptyset$
                    \State $\delta(x) \leftarrow 0$
                    \State $d^+_{\min} \gets \infty$; $d_{\text{to\_cycle}} \gets \infty$; $\ell_{\text{best}} \gets \infty$ \Comment{Track composite distance}
                    \While{there exists $v \notin Q$ such that $\delta(v) < \gamma/2$} \label{line:while_condition}
                        \State $y \leftarrow \arg\min_{v \notin Q} \{\delta(v)\}$
                        \State $Q \leftarrow Q \cup \{y\}$
                        \ForAll{vertex $z$ adjacent to $y$}
                            \If{$z \notin Q$}
                                \If{$\delta(y) + w(y,z) < \delta(z)$}
                                    \State $\delta(z) \leftarrow \delta(y) + w(y,z)$
                                    \State $\texttt{pred}(z) \leftarrow y$
                                \EndIf
                            \ElsIf{$z \in Q$ and $z \neq \texttt{pred}(y)$} \label{line:cycle_check}
                                \State $p \leftarrow \text{LCA}_{\texttt{pred}^*}(y, z)$ \label{line:lca}
                                \State $\ell_c \leftarrow \delta(y) + \delta(z) + w(y,z) - 2\delta(p)$ \label{line:cycle_length_calc}
                                \State $d_{x,c} \gets \delta(p)$ \Comment{Distance from $x$ to cycle $c$} \label{line:dist_to_cycle}
                                \State $d^+_{x,c} \gets d_{x,c} + \ell_c$ \Comment{Composite distance} \label{line:composite_dist}
                                \State $\gamma \leftarrow \min(\gamma, \ell_c)$ \label{line:gamma_update}
                                \If{$d^+_{x,c} < d^+_{\min}$}
                                    \State $d^+_{\min} \gets d^+_{x,c}$; $d_{\text{to\_cycle}} \gets d_{x,c}$; $\ell_{\text{best}} \gets \ell_c$
                                \EndIf
                            \EndIf
                        \EndFor
                    \EndWhile
                    \State \Comment{Apply Vertex Discarding (Theorem~\ref{thm:exclude_vertices})}
                    \If{$d^+_{\min} < \infty$ \textbf{and} $\ell_{\text{best}} > \gamma$ \textbf{and} $d^+_{\min} < 3\gamma/2$} \label{line:discard_condition}
                        \ForAll{$z \in V_{\text{active}} \setminus \{x\}$ with $z$ not yet processed} \label{line:discard_loop}
                            \If{$\delta(z) \leq d_{\text{to\_cycle}}$} \Comment{$d(x,z) \leq d(x,c)$}
                                \State $V_{\text{active}} \gets V_{\text{active}} \setminus \{z\}$
                            \EndIf
                        \EndFor
                    \EndIf
                \EndFor
                \State \Return $\gamma$
            \end{algorithmic}
        \end{algorithm}
    \end{minipage}
\end{figure}

\subsection{Correctness Proof}
\label{subsec:correctness}

To establish that Algorithm \ref{alg:MWC_Dijkstra} correctly computes the length of the MWC, we rely on proving loop invariants for both the inner `while' loop (Lemma \ref{lem:inner}) and the outer `for' loop (Lemma \ref{lem:outer}). A key concept used in these proofs is the $Q$-path.

\begin{definition}[$Q$-path]
Given a vertex set $Q \subseteq V$, a $Q$-path is a simple path $v_0, v_1, \dots, v_n$ such that $v_i \in Q$ for all $i \in \{0, \dots, n-1\}$. The endpoint $v_n$ may or may not belong to $Q$.
\end{definition}

The invariants essentially state that during the inner loop initiated from vertex $x$:
\begin{itemize}
    \item For vertices $u$ already finalized ($u \in Q$), $\delta(u)$ correctly stores the shortest path distance $d(x,u)$ (Invariant \eqref{eq:delta-eq}).
    \item Vertices in $Q$ are always closer to $x$ (in terms of current $\delta$ values) than vertices not yet in $Q$ (Invariant \eqref{eq:delta-less}).
    \item The $\delta$ value for any vertex $v$ with a finite estimate corresponds to the length of a specific path found so far involving its predecessor (Invariant \eqref{eq:delta-pred}).
    \item $\delta(v)$ provides a lower bound for the length of any $Q$-path from $x$ to $v$ (Invariant \eqref{eq:deltaQpath1}).
    \item If $\delta(v)$ is finite, there exists a $Q$-path from $x$ to $v$ realizing this length (Invariant \eqref{eq:deltaQpath2}).
    \item The global variable $\gamma$ always remains a lower bound on the length of any cycle contained entirely within the currently explored region $Q$ and passing through $x$ (Invariant \eqref{eq:gammaC}).
    \item The variables $d^+_{\min}$, $d_{\text{to\_cycle}}$, $\ell_{\text{best}}$ correctly track the minimum composite distance (Invariant \eqref{eq:dplus_min}).
\end{itemize}

\begin{lemma}[Inner Loop Invariants]
\label{lem:inner}
The {\bf while} loop in Algorithm~\ref{alg:MWC_Dijkstra}, for a fixed outer loop iteration $x$, maintains the following invariants:
\begin{align}
\label{eq:delta-eq}
&\forall u \in Q: \delta(u) = d(x,u) \quad (\text{which is finite}) \\
\label{eq:delta-less}
&\forall u \in Q, \forall v \notin Q: \delta(u) \leq \delta(v) \\
\label{eq:delta-pred}
&\forall v \in V \setminus \{x\} \text{ with } \delta(v) < \infty: \exists u = \texttt{pred}(v) \in Q \text{ s.t. } \delta(v) = \delta(u) + w(u,v) = d(x,u) + w(u,v) \\
\label{eq:deltaQpath1}
&\forall v \in V, \forall Q\text{-paths } \pi \text{ from } x \text{ to } v: \delta(v) \leq \ell(\pi) \\
\label{eq:deltaQpath2}
&\forall v \in V \text{ with } \delta(v) < \infty: \exists \text{ a } Q\text{-path } \pi \text{ from } x \text{ to } v \text{ with } \delta(v) = \ell(\pi) \\
\label{eq:gammaC}
&\forall \text{ cycles } C \text{ with } x \in \text{vertices}(C) \subseteq Q: \gamma \leq \ell(C) \\
\label{eq:dplus_min}
&d^+_{\min} = \min_{c \text{ detected so far from } x} d^+(x,c) \text{ with } d_{\text{to\_cycle}}, \ell_{\text{best}} \text{ corresponding to a minimizing cycle}
\end{align}
\end{lemma}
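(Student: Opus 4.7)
I would prove the seven invariants simultaneously by induction on iterations of the inner \textbf{while} loop, with the outer vertex $x$ held fixed. For the base case, before the first iteration, $Q = \emptyset$, $\delta(x) = 0$, $\delta(v) = \infty$ for $v \ne x$, $d^+_{\min} = \infty$, and $\gamma$ is inherited from earlier outer iterations as a valid upper bound on cycle lengths. Invariants \eqref{eq:delta-eq}, \eqref{eq:delta-less}, and \eqref{eq:gammaC} hold vacuously (no $u \in Q$, no cycle with $\text{vertices}(C) \subseteq \emptyset$); \eqref{eq:delta-pred}, \eqref{eq:deltaQpath1}, and \eqref{eq:deltaQpath2} hold since only $x$ has finite $\delta$ and the only $Q$-path from $x$ is the trivial zero-length path at $x$; and \eqref{eq:dplus_min} holds as the empty minimum.

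For the inductive step, suppose all invariants hold at the start of an iteration. The iteration selects $y = \arg\min_{v \notin Q}\delta(v)$, moves it into $Q$, and relaxes its outgoing edges. For the Dijkstra-style invariants \eqref{eq:delta-eq}--\eqref{eq:deltaQpath2} I would follow the classical argument, exploiting the positivity of edge weights. The critical step is \eqref{eq:delta-eq} at $y$: if some path $\pi$ from $x$ to $y$ were shorter than $\delta(y)$, then letting $v$ be the first vertex of $\pi$ outside the old $Q$, the prefix of $\pi$ up to $v$ would be a $Q$-path certifying $\delta(v) \le \ell(\pi_{x \to v}) \le \ell(\pi) < \delta(y)$ by \eqref{eq:deltaQpath1}, contradicting the minimality of $y$. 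The remaining invariants \eqref{eq:delta-less}--\eqref{eq:deltaQpath2} are preserved by checking each relaxation: an update $\delta(z) \gets \delta(y) + w(y,z)$ extends the certifying $Q$-path of $y$ by the edge $(y,z)$, which in turn certifies the new $\delta(z)$. Invariant \eqref{eq:dplus_min} is a straightforward bookkeeping check on the conditional update that follows line \ref{line:composite_dist}.

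The main obstacle is \eqref{eq:gammaC}. Let $C$ be a simple cycle with $x \in \text{vertices}(C) \subseteq Q \cup \{y\}$ at the end of the iteration. If $y \notin \text{vertices}(C)$, then $\text{vertices}(C) \subseteq Q$ already at the start of the iteration, so the inductive hypothesis gives $\gamma \le \ell(C)$; note that $\gamma$ is monotonically non-increasing. Otherwise $y$ has exactly two $C$-neighbors in $Q$, and since $\texttt{pred}(y)$ is a single vertex, at least one of them, call it $z$, satisfies $z \ne \texttt{pred}(y)$. The edge $(y,z)$ then triggers cycle detection at line \ref{line:cycle_check}, producing a detected cycle of length $\ell_c = \delta(y) + \delta(z) + w(y,z) - 2\delta(p)$ where $p = \text{LCA}(y,z)$. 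Substituting $\delta = d(x, \cdot)$ on $Q$ via \eqref{eq:delta-eq}, and decomposing $C$ into the edge $(y,z)$ together with a simple path from $z$ back to $y$ that contains $x$ (and hence has length at least $d(x,z) + d(x,y)$), I obtain
\[
\ell(C) \;\ge\; w(y,z) + d(x,y) + d(x,z) \;\ge\; d(x,y) + d(x,z) + w(y,z) - 2\delta(p) \;=\; \ell_c,
\]
using $\delta(p) \ge 0$. The update at line \ref{line:gamma_update} then yields $\gamma \le \ell_c \le \ell(C)$, closing the induction.
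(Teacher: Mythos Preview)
Your proposal is correct and follows essentially the same route as the paper's own proof: both argue by induction on iterations of the inner loop, handle invariants \eqref{eq:delta-eq}--\eqref{eq:deltaQpath2} and \eqref{eq:dplus_min} via standard Dijkstra and bookkeeping arguments, and treat \eqref{eq:gammaC} by taking a $C$-neighbor $z$ of $y$ with $z \neq \texttt{pred}(y)$, observing that the resulting detected cycle has length $\ell_c \le w(y,z) + d(x,y) + d(x,z) \le \ell(C)$ because $C$ passes through $x$. Your base case is in fact slightly more careful than the paper's, since you correctly note that $\gamma$ is inherited from previous outer iterations rather than necessarily being $\infty$; this does not matter for \eqref{eq:gammaC} (which is vacuous when $Q=\emptyset$), but it is the right way to state it.
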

\begin{proof} The proof proceeds by induction.
\textit{Base Case:} Initially $Q = \emptyset$, $\delta(x)=0$, $\delta(v)=\infty$ for $v \neq x$, $\gamma = \infty$, and $d^+_{\min} = \infty$. Invariants \eqref{eq:delta-eq}-\eqref{eq:delta-pred}, \eqref{eq:gammaC}, and \eqref{eq:dplus_min} hold vacuously or trivially. \eqref{eq:deltaQpath1} holds because the only $Q$-path is $x$ itself. Invariant \eqref{eq:deltaQpath2} holds since we need to consider only $v=x$.

\textit{Inductive Step:} Assume the invariants hold before an iteration where vertex $y$ is selected and added to $Q$. We need to show they hold after the iteration (when $Q' = Q \cup \{y\}$ and $\delta, \gamma, d^+_{\min}$ may be updated).
\begin{itemize}
    \item Invariant \eqref{eq:delta-eq}: We must show $\delta(y) = d(x,y)$, where ``$\ge$" follows from invariant (\ref{eq:deltaQpath2}); we shall use standard Dijkstra logic to show that it is impossible that $\delta(y) > d(x,y)$. For then there exists a path $\pi$ from $x$ to $y$ with $l(\pi) < \delta(y)$. Let $z$ be the first vertex on $\pi$ not in $Q$. Then $d(x,z) < \delta(y)$, but since $z \notin Q$, $\delta(z) \ge \delta(y)$ by invariant \eqref{eq:delta-less}, a contradiction since by invariant (\ref{eq:deltaQpath1}), $\delta(z) \le l(\pi) < \delta(y)$.
    \item Invariant \eqref{eq:delta-less}: If $u \in Q'$ and $v \notin Q'$, we need $\delta'(u) \le \delta'(v)$. If $u \in Q$, $\delta'(u)=\delta(u) \le \delta(y)$. If $u=y$, $\delta'(u)=\delta(y)$. If $v$ was updated, $\delta'(v) = \delta(y) + w(y,v) \ge \delta(y)$. If $v$ was not updated, $\delta'(v)=\delta(v) \ge \delta(y)$. The inequality holds.
    \item Invariant \eqref{eq:delta-pred}: Holds by construction for vertices $z$ whose $\delta$ value is updated in this iteration (Line 16). Uses the proven $\delta(y)=d(x,y)$.
    \item Invariants \eqref{eq:deltaQpath1}, \eqref{eq:deltaQpath2}: These proofs follow the standard Dijkstra arguments, considering paths that either do or do not pass through the newly added vertex $y$.
    \item Invariant \eqref{eq:gammaC}: If a cycle $C$ has $\text{vertices}(C) \subseteq Q'$, and it does not contain $y$, the invariant holds by induction. If $y \in C$, let $z_1, z_2$ be its neighbors in $C$. Since the entire cycle $C$ is contained within $Q' = Q \cup \{y\}$, and neither $z_1$ nor $z_2$ is the vertex $y$, it follows that both neighbors must have already been in the set $Q$ when $y$ was processed.
    
    We can without loss of generality assume that $z_1$ is processed from $y$ before $z_2$ is. Let $z$ be $z_1$ if $z_1 \neq \texttt{pred}(y)$, but let $z$ be $z_2$ if $z_1 = \texttt{pred}(y)$; in either case, $z \neq \texttt{pred}(y)$. We observe that processing $z$ causes a cycle $C'$ to be detected,
    formed by the edge $(y, z)$ and the paths in the current shortest path tree connecting $y$ and $z$ back to their Lowest Common Ancestor, $p$. Its length is $\ell(C') = w(y, z) + \delta(y) + \delta(z) - 2\delta(p)$.
    Our arbitrary cycle $C$ also contains the edge $(y, z)$. By the condition of the invariant, $C$ must pass through the root $x$. Therefore, the path connecting $y$ and $z$ within $C$ (excluding the edge $(y, z)$) must have a length of at least the shortest distance from $y$ to $x$ plus $x$ to $z_1$, which is $\delta(y) + \delta(z)$. Thus, $\ell(C) \ge w(y, z) + \delta(y) + \delta(z)$. Since edge weights are non-negative, $\delta(p) \ge 0$. Comparing the lengths, it must be that $\ell(C') \le \ell(C)$.
    The algorithm's update rule, $\gamma' \leftarrow \min(\gamma, \ell(C'))$, ensures that $\gamma' \le \ell(C')$. Combining these inequalities gives $\gamma' \le \ell(C') \le \ell(C)$, which proves the invariant holds.
    \item Invariant \eqref{eq:dplus_min}: Holds by construction (Lines 24-26). When a new cycle $c$ is detected, $d^+(x,c) = d_{x,c} + \ell_c = \delta(p) + [\delta(y) + \delta(z) + w(y,z) - 2\delta(p)]$ is computed, and $d^+_{\min}$ is updated if this value is smaller.
\end{itemize}
\end{proof}

The outer loop invariants ensure that the algorithm correctly maintains the global minimum cycle length estimate $\gamma$ and that the early termination condition of the inner loop is sound.

\begin{lemma}[Outer Loop Invariants]
\label{lem:outer}
The {\bf for} loop (Lines 3-end) in Algorithm~\ref{alg:MWC_Dijkstra} maintains the following invariants, where $X$ is the set of vertices $x$ processed so far by the outer loop:
\begin{align}
\label{eq:gamma-leq}
&\text{If } C \text{ is a cycle with } \text{vertices}(C) \cap X \neq \emptyset \text{ then } \gamma \leq \ell(C). \\
\label{eq:gamma-eq}
&\text{If } \gamma < \infty \text{ then there exists a cycle } C \text{ with } \ell(C) = \gamma. \\
\label{eq:active_correct}
&\text{All vertices of any MWC } c^* \text{ remain in } V_{\text{active}}.
\end{align}
\end{lemma}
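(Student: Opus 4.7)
The plan is induction on the outer for loop, letting $X$ be the set of vertices already processed. In the base case $X=\emptyset$, $\gamma=\infty$ and $V_{\text{active}}=V$, so \eqref{eq:gamma-leq} and \eqref{eq:gamma-eq} hold vacuously and \eqref{eq:active_correct} is immediate. Invariant \eqref{eq:gamma-eq} is preserved in the inductive step because $\gamma$ is only ever overwritten with $\min(\gamma,\ell_c)$ for an $\ell_c$ that the cycle-detection branch has just witnessed as the length of a concrete cycle (the tree paths from the LCA $p$ to $y,z$ plus the non-tree edge), so whenever $\gamma<\infty$ it remains realized.

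For \eqref{eq:gamma-leq} consider a cycle $C$ with $\text{vertices}(C)\cap X'\ne\emptyset$, where $X'=X\cup\{x\}$ is the set after the current outer iteration. If $C$ already touches $X$, the IH and the monotone non-increase of $\gamma$ give $\gamma'\le\gamma\le\ell(C)$. The substantive case is $x\in\text{vertices}(C)$, which I would handle by contradiction: suppose the final value $\gamma_f>\ell(C)$. Since $\gamma$ is monotone during the inner loop, $\gamma\ge\gamma_f>\ell(C)$ throughout, so the admission threshold $\gamma/2$ stays strictly above $\ell(C)/2$. Every vertex $v\in\text{vertices}(C)$ satisfies $d(x,v)\le\ell(C)/2<\gamma/2$ via the shorter arc of $C$, and the standard Dijkstra ordering guaranteed by Lemma~\ref{lem:inner} places every such $v$ into $Q$. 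Let $v^*$ be the last such vertex added; both of its $C$-neighbors $u_1,u_2$ are already in $Q$, and since at most one can equal $\texttt{pred}(v^*)$, the other, call it $u$, triggers the cycle-detection branch on edge $(v^*,u)$, producing a cycle of length at most $d(x,v^*)+d(x,u)+w(v^*,u)$. The two arcs of $C$ from $x$ to $v^*$ and from $x$ to $u$ taken in opposite directions together cover $C\setminus\{(v^*,u)\}$, bounding this sum by $\ell(C)$; the ensuing update forces $\gamma\le\ell(C)$, contradicting $\gamma_f>\ell(C)$.

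For \eqref{eq:active_correct} the plan is to invoke Theorem~\ref{thm:exclude_vertices} with $j=x$ and $c$ the cycle achieving $d^+_{\min}$. The guard $\ell_{\text{best}}>\gamma$ together with $\gamma\ge\ell(c^*)$ (from \eqref{eq:gamma-eq} and $\gamma$'s role as an upper bound on the girth) ensures $c$ is not itself an MWC. The guard $d^+_{\min}<3\gamma/2$ is meant to upgrade $d^+_{\min}$ to the true minimum $d^+(x)$: any hypothetical cycle $c'$ with $d^+(x,c')<d^+_{\min}$ has its split-edge endpoints (relative to the vertex of $c'$ closest to $x$) at distance at most $d^+(x,c')-\ell(c')/2$ from $x$, and the three conditions $\ell(c')\ge\ell(c^*)$, $\ell(c^*)\le\gamma<\ell_{\text{best}}$, and $d^+_{\min}<3\gamma/2$ should jointly force that distance strictly below the admission threshold $\gamma/2$, so the split edge would be processed and would contribute a detected cycle of composite distance at most $d^+(x,c')$, contradicting the minimality of $d^+_{\min}$. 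Once $d^+_{\min}=d^+(x)$, Theorem~\ref{thm:exclude_vertices} applies directly, and since $\delta(z)\ge d(x,z)$ always (Lemma~\ref{lem:inner}), the algorithm's conservative test $\delta(z)\le d_{\text{to\_cycle}}$ removes only vertices genuinely satisfying $d(x,z)\le d(x,c)$.

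The hard part I expect is the quantitative step $d^+_{\min}<3\gamma/2\Rightarrow d^+_{\min}=d^+(x)$: the split-edge bound $d(x,y),d(x,z)\le d^+(x,c')-\ell(c')/2$ must be combined with the slack $\ell_{\text{best}}-\gamma>0$ in just the right way to clear the $\gamma/2$ admission cutoff when the closest vertex of $c'$ to $x$ is not $x$ itself. The other two invariants rely on the cleaner split-edge/last-vertex-added argument above and should go through without serious difficulty.
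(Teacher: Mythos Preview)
Your handling of \eqref{eq:gamma-leq} and \eqref{eq:gamma-eq} is correct and matches the paper's argument with only cosmetic reorganisation. Where the paper splits into the two cases $\text{vertices}(C)\subseteq Q_{\text{final}}$ (then invoke inner invariant \eqref{eq:gammaC}) versus $\exists\,v\in C\setminus Q_{\text{final}}$ (then $\ell(C)\ge 2d(x,v)\ge\gamma_{\text{new}}$), you run a single contradiction that forces all of $C$ into $Q$ and re-derives the last-vertex-added bound inline. Both are fine; the paper's version has the small advantage that it only reasons about the terminal value $\gamma_{\text{new}}$ rather than tracking $\gamma$ throughout the inner loop.

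The genuine gap is in your plan for \eqref{eq:active_correct}. Your split-edge bound $d(x,y),d(x,z)\le d^+(x,c')-\ell(c')/2$ is correct, but it does \emph{not} clear the $\gamma/2$ admission threshold. Combining it with $d^+(x,c')<d^+_{\min}<3\gamma/2$ gives only
\[
d(x,y)\;<\;\tfrac{3\gamma}{2}-\tfrac{\ell(c')}{2},
\]
and the only lower bound you have on $\ell(c')$ is $\ell(c')\ge\ell(c^*)$ with $\ell(c^*)\le\gamma$. That makes the right-hand side as large as $\gamma$, not $\gamma/2$; in particular, taking $c'=c^*$ (the very case you most need) gives no useful bound at all. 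So the split-edge endpoints are not forced below the threshold, the detection you hope for need not occur, and the upgrade $d^+_{\min}=d^+(x)$ does not follow from your calculation. Your own ``hard part'' caveat is exactly where the argument fails.

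The paper takes a shorter route, isolating it as Lemma~\ref{lem:vertex_discarding}: rather than pushing vertices of a competing $c'$ into $Q$, it bounds the composite distance of any cycle the inner loop has \emph{not} accounted for from below. Termination of the while loop gives $d(x,c')\ge\gamma/2$ for such $c'$, and together with $\ell(c')\ge\gamma$ one gets $d^+(x,c')\ge\gamma/2+\gamma=3\gamma/2$. Since the guard ensures $d^+_{\min}<3\gamma/2$, this immediately yields $d^+_{\min}=d^+(x)$, after which Theorem~\ref{thm:exclude_vertices} applies exactly as you outlined. Replace the split-edge computation with this lower-bound argument.
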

\begin{proof}
(Sketch) Again, proved by induction on the iterations of the outer loop.
\textit{Base Case:} Initially $X=\emptyset, \gamma=\infty, V_{\text{active}} = V$. All invariants hold vacuously or trivially.

\textit{Inductive Step:} Assume invariants hold before processing vertex $x$. Let $\gamma_{old}$ be the value before the inner loop, $\gamma_{new}$ the value after. We know $\gamma_{new} \le \gamma_{old}$.
\begin{itemize}
    \item Invariant \eqref{eq:gamma-leq}: Consider a cycle $C$ with $\text{vertices}(C) \cap (X \cup \{x\}) \neq \emptyset$. If $\text{vertices}(C) \cap X \neq \emptyset$, then $\gamma_{old} \le \ell(C)$ by induction. Since $\gamma_{new} \le \gamma_{old}$, we have $\gamma_{new} \le \ell(C)$. Now assume $\text{vertices}(C) \cap X = \emptyset$, which means $x \in \text{vertices}(C)$. We need to show $\gamma_{new} \le \ell(C)$.
    Consider the state when the inner loop for $x$ terminates. Let $Q_{final}$ be the final set $Q$. The termination condition $\delta(v) < \gamma_{new}/2$ fails for all $v \notin Q_{final}$. This implies $\delta(v) \ge \gamma_{new}/2$ for all $v \notin Q_{final}$. 
        The true shortest path distance must also satisfy this bound: 
    \[
    d(x,v) \ge \gamma_{new}/2 \mbox{ for all } v \notin Q_{final}.
    \]

    To see this, assume that $d(x,v) < \gamma_{new}/2$ for some $v \notin Q_{final}$. Consider a shortest path from $x$ to $v$, and let $u$ be the first node on that path not in $Q_{final}$. Let $z$ be its predecessor on this path ($z \in Q_{final}$). When $z$ was processed and added to $Q$, the edge $(z,u)$ was relaxed, ensuring that $\delta(u) \le \delta(z) + w(z,u)$. Since this is a shortest path and $\delta(z)=d(x,z)$ (by Invariant~\eqref{eq:delta-eq}), we have $\delta(u) \le d(x,u)$. Furthermore, $d(x,u) \le d(x,v) < \gamma_{new}/2$. Thus, we have found a vertex $u \notin Q_{final}$ such that $\delta(u) < \gamma_{new}/2$. This contradicts the termination condition of the while loop (Line 9). Therefore, the assumption must be false, and $d(x,v) \ge \gamma_{new}/2$ for all $v \notin Q_{final}$.
    
    If $\text{vertices}(C) \subseteq Q_{final}$, then invariant \eqref{eq:gammaC} from Lemma \ref{lem:inner} ensures $\gamma_{new} \le \ell(C)$.
    If there exists $v \in \text{vertices}(C)$ such that $v \notin Q_{final}$, then we can write $C$ as two paths between $x$ and $v$, say $\pi_1$ and $\pi_2$. Then $\ell(C) = \ell(\pi_1) + \ell(\pi_2) \ge d(x,v) + d(x,v) = 2d(x,v)$. Since $v \notin Q_{final}$, $d(x,v) \ge \gamma_{new}/2$. Thus, $\ell(C) \ge 2(\gamma_{new}/2) = \gamma_{new}$. In all cases, $\gamma_{new} \le \ell(C)$. This shows the termination condition is sound.
    \item Invariant \eqref{eq:gamma-eq}: If $\gamma$ is updated in the inner loop (Line \ref{line:gamma_update}) to $\gamma_{new} = \ell(C')$ for some cycle $C'$ formed by paths $x \leadsto y$, edge $(y,z)$, path $z \leadsto x$, then the invariant holds.  The construction of the cycle $C'$ involves paths traced back via $\texttt{pred}$ to the LCA $p$ and ensures $C'$ is a simple cycle whose length matches the calculated value $\delta(y) + \delta(z) + w(y,z) - 2\delta(p)$. 
    
    If $\gamma$ is not updated, $\gamma_{new} = \gamma_{old}$, and the invariant holds by induction.
    \item Invariant \eqref{eq:active_correct}: This follows from Lemma~\ref{lem:vertex_discarding} below.
\end{itemize}
\end{proof}

The following lemma establishes that the vertex discarding step preserves correctness.

\begin{lemma}[Correctness of Vertex Discarding]
\label{lem:vertex_discarding}
The vertex discarding step (Lines 30-36 in Algorithm~\ref{alg:MWC_Dijkstra}) preserves correctness: no vertex belonging to any MWC is ever removed from $V_{\text{active}}$.
\end{lemma}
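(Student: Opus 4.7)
The strategy is to reduce the lemma to Theorem~\ref{thm:exclude_vertices} applied at $j = x$ with $c$ the detected cycle attaining $d^+_{\min}$ in the current outer iteration. First, using $d^+_{\min} = d_{\text{to\_cycle}} + \ell_{\text{best}}$ (as recorded on Lines~24--26), the guards $\ell_{\text{best}} > \gamma$ and $d^+_{\min} < 3\gamma/2$ give the key estimate
\[
d_{\text{to\_cycle}} \;=\; d^+_{\min} - \ell_{\text{best}} \;<\; \frac{3\gamma}{2} - \gamma \;=\; \frac{\gamma}{2}.
\]
Therefore any candidate $z$ with $\delta(z) \le d_{\text{to\_cycle}}$ satisfies $\delta(z) < \gamma/2$; by the termination analysis in the proof of Lemma~\ref{lem:outer} (every $v \notin Q_{\text{final}}$ has $d(x,v) \ge \gamma/2$), $z$ must lie in $Q_{\text{final}}$, whence $\delta(z) = d(x,z)$.

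Next, I verify Theorem~\ref{thm:exclude_vertices}'s hypothesis that $c$ is not an MWC: invariant~\eqref{eq:gamma-eq} produces a cycle realizing $\gamma$, so for any MWC $c^*$ we have $\gamma \ge \ell(c^*)$; combined with the guard $\ell(c) = \ell_{\text{best}} > \gamma$, one obtains $\ell(c) > \ell(c^*)$. Conditional on the identity $d^+_{\min} = d^+(x)$, Theorem~\ref{thm:exclude_vertices} then forces any $z$ with $d(x,z) \le d_{\text{to\_cycle}} = d(x,c)$ to lie outside every MWC, which is the conclusion of the lemma.

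The principal obstacle is thus establishing $d^+_{\min} = d^+(x)$ under the guard conditions, since a priori the algorithm only minimizes over fundamental cycles of the shortest-path tree $T$ rooted at $x$, whereas $d^+(x)$ is a minimum over all cycles of $G$. I would discharge this by contradiction: suppose some cycle $\tilde c$ satisfied $d^+(x,\tilde c) < d^+_{\min}$. The estimates above place $\tilde c$ close to $x$ (notably $d(x,\tilde c) < \gamma/2$), and the arc bound $d(x,v) \le d(x,\tilde c) + \ell(\tilde c)/2$ confines every $v \in \tilde c$ to within $\gamma$ of $x$; tracking the monotone decrease of $\gamma$ through the inner loop should yield $\tilde c \subseteq Q_{\text{final}}$. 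Finally, choosing a non-tree edge $(y,z')$ of $\tilde c$ adjacent to its closest-to-$x$ vertex and applying the LCA identity $d^+(x,F) = \delta(y) + \delta(z') + w(y,z') - \delta(p)$ together with the triangle inequality along the two arcs of $\tilde c$ joining $y$ and $z'$ through $\tilde c$'s closest vertex, one shows the detected fundamental cycle $F$ satisfies $d^+(x,F) \le d^+(x,\tilde c)$, contradicting $d^+_{\min} \le d^+(x,F)$. This LCA-triangle-inequality comparison is the technical crux of the argument.
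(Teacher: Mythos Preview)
Your overall strategy---reduce to Theorem~\ref{thm:exclude_vertices} at $j=x$, verify that $c$ is not an MWC via invariant~\eqref{eq:gamma-eq}, and then chase the inequality $d(x,z)\le d(x,c)<d(x,c^*)$---matches the paper. One minor difference: you route through $d_{\text{to\_cycle}}<\gamma/2$ to place $z$ in $Q_{\text{final}}$ and obtain $\delta(z)=d(x,z)$; the paper simply uses $d(x,z)\le\delta(z)$ from invariant~\eqref{eq:deltaQpath1}, which already suffices. The substantive divergence is in how you establish $d^+_{\min}=d^+(x)$.

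The paper dispatches this in two sentences: any cycle $c'$ not detected during the inner loop satisfies $d(x,c')\ge\gamma/2$ and $\ell(c')\ge\gamma$, so $d^+(x,c')\ge 3\gamma/2>d^+_{\min}$; hence the minimum over all cycles agrees with the minimum over detected ones. Your proposed route instead argues by contradiction that a hypothetical $\tilde c$ with $d^+(x,\tilde c)<d^+_{\min}$ must lie entirely in $Q_{\text{final}}$, then extracts a non-tree edge adjacent to its closest vertex and compares the resulting fundamental cycle $F$ to $\tilde c$ via the LCA identity and triangle inequality. This is substantially more machinery than the paper uses, and the sketch has real gaps. The arc bound gives only $d(x,v)\le d(x,\tilde c)+\ell(\tilde c)/2$, which with your estimates yields $d(x,v)<\gamma$, whereas membership in $Q_{\text{final}}$ requires $d(x,v)<\gamma/2$; the appeal to ``tracking the monotone decrease of $\gamma$'' does not obviously close this, since the value of $\gamma$ at the moment $v$ would be dequeued is not controlled by the final value. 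Your claim $d(x,\tilde c)<\gamma/2$ also tacitly needs $\ell(\tilde c)\ge\gamma$, which is exactly one of the paper's two asserted bounds and is not independently established in your sketch. The paper's direct $3\gamma/2$ bound sidesteps all of this.
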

\begin{proof}
Consider the state after the inner loop for vertex $x$ completes. Let $c$ be the cycle achieving $d^+_{\min}$. The discarding logic is activated only if three conditions are met (Line 30):
\begin{enumerate}
    \item $d^+_{\min} < \infty$ (a cycle was detected),
    \item $\ell_{\text{best}} > \gamma$ (the cycle $c$ is not an MWC),
    \item $d^+_{\min} < 3\gamma/2$.
\end{enumerate}

We first show that under these conditions, $d^+_{\min}$ is the true minimum composite distance $d^+(x)$. The inner loop terminates when all vertices $v \notin Q$ satisfy $\delta(v) \ge \gamma/2$. Any cycle $c'$ not detected by the algorithm must have $d(x,c') \ge \gamma/2$. Furthermore, $\ell(c') \ge \gamma$. Thus, the composite distance of any undetected cycle is $d^+(x,c') = d(x,c') + \ell(c') \ge \gamma/2 + \gamma = 3\gamma/2$.

Since condition (3) ensures $d^+_{\min} < 3\gamma/2$, $d^+_{\min}$ must be strictly less than the composite distance of any undetected cycle. Therefore, $d^+_{\min} = d^+(x)$.

Now we apply the logic of Theorem~\ref{thm:exclude_vertices}. Let $c^*$ be any MWC. Since $d^+(x) = d^+(x,c)$ and $\ell(c) > \ell(c^*)$ (by condition 2):
\begin{align*}
d(x,c) + \ell(c) = d^+(x) &\leq d^+(x,c^*) = d(x,c^*) + \ell(c^*) \\
&< d(x,c^*) + \ell(c).
\end{align*}
Canceling $\ell(c)$ yields $d(x,c) < d(x,c^*)$.

Consider a vertex $z$ satisfying the discarding criterion (Line 32): $\delta(z) \leq d_{\text{to\_cycle}} = d(x,c)$. We know $d(x,z) \leq \delta(z)$ by invariant~\eqref{eq:deltaQpath1}.
If $z$ were on $c^*$, then $d(x,c^*) \leq d(x,z)$. This leads to the contradiction:
\[
d(x,c^*) \leq d(x,z) \leq \delta(z) \leq d(x,c).
\]
Therefore, $z \notin \text{vertices}(c^*)$, and discarding $z$ is safe.
\end{proof}

Finally, the overall correctness follows directly from the outer loop invariants holding upon termination.

\begin{theorem}[Overall Correctness]
\label{thm:overall_correctness}
Algorithm~\ref{alg:MWC_Dijkstra} terminates and returns $\gamma = \ell(c^*)$, where $c^*$ is a Minimum Weight Cycle in $G$.
\end{theorem}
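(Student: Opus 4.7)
The plan is to establish termination and correctness separately, drawing directly on the loop invariants already proved in Lemma~\ref{lem:outer} and Lemma~\ref{lem:vertex_discarding}.

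For termination, I would observe that the outer \textbf{for} loop iterates over vertices of $V_{\text{active}} \subseteq V$, a finite set which only shrinks, so it visits at most $|V|$ vertices. For each root $x$, the inner \textbf{while} loop adds one new vertex to $Q$ per iteration, and $Q \subseteq V$; since vertices are never removed from $Q$, the loop terminates after at most $|V|$ iterations. Thus the algorithm terminates after finitely many steps.

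For correctness I would prove $\gamma = \ell(c^*)$ by two inequalities. The lower bound $\gamma \ge \ell(c^*)$ is immediate from invariant \eqref{eq:gamma-eq}: if $\gamma < \infty$ at termination, there is an actual cycle $C$ with $\ell(C) = \gamma$, and since $c^*$ is a minimum weight cycle, $\ell(C) \ge \ell(c^*)$. The harder direction is the upper bound $\gamma \le \ell(c^*)$, and this is the step I expect to require the most care. I would argue as follows. Fix any MWC $c^*$ and any $v^* \in \text{vertices}(c^*)$. Initially $v^* \in V_{\text{active}} = V$, and by invariant \eqref{eq:active_correct} (proved via Lemma~\ref{lem:vertex_discarding}), no vertex of $c^*$ is ever discarded; in particular $v^*$ remains in $V_{\text{active}}$ throughout. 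Since the outer loop processes every vertex that is in $V_{\text{active}}$ when its turn arrives, $v^*$ is eventually processed, so $v^* \in X$ at termination. Now apply invariant \eqref{eq:gamma-leq} to the cycle $c^*$: since $v^* \in \text{vertices}(c^*) \cap X$, we obtain $\gamma \le \ell(c^*)$, which also establishes that $\gamma < \infty$ on exit.

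Combining the two inequalities yields $\gamma = \ell(c^*)$, completing the proof. The main obstacle, as noted, is the upper bound argument: one must be careful to verify that the outer loop actually visits an MWC vertex despite the fact that $V_{\text{active}}$ is being dynamically pruned during iteration. This is where invariant \eqref{eq:active_correct} (and hence Lemma~\ref{lem:vertex_discarding}) is essential, since without it one could in principle discard every vertex of $c^*$ before any of them is chosen as a root.
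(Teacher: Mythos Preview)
Your proposal is correct and follows essentially the same argument as the paper: termination via finiteness of $V_{\text{active}}$ and $Q$, then $\gamma \ge \ell(c^*)$ from invariant~\eqref{eq:gamma-eq} and $\gamma \le \ell(c^*)$ from invariants~\eqref{eq:active_correct} and~\eqref{eq:gamma-leq}. If anything, your write-up is slightly tighter, since the paper includes an intermediate claim (that $c^*$ is explicitly detected when $x^*$ is processed) which is already subsumed by invariant~\eqref{eq:gamma-leq}.
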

\begin{proof}
The algorithm terminates because the outer loop iterates over $V_{\text{active}}$, a finite set that can only shrink, and the inner `while' loop processes each vertex at most once per outer iteration.

By invariant \eqref{eq:active_correct}, all vertices of any MWC $c^*$ remain in $V_{\text{active}}$ throughout execution. Therefore, the outer loop will eventually process some vertex $x^* \in \text{vertices}(c^*)$.

When vertex $x^* \in \text{vertices}(c^*)$ is processed, since $d(x^*, c^*) = 0$, the cycle $c^*$ will be detected (assuming it wasn't found earlier), and $\gamma$ will be updated to at most $\ell(c^*)$.

By invariant \eqref{eq:gamma-leq} of Lemma \ref{lem:outer}, $\gamma \leq \ell(C)$ for all cycles $C$ that share a vertex with any processed vertex. Since all MWC vertices are processed (they remain in $V_{\text{active}}$), we have $\gamma \le \ell(c^*)$.

By invariant \eqref{eq:gamma-eq}, there exists a cycle $C'$ with $\ell(C') = \gamma$. Since $c^*$ is an MWC, $\ell(C') \ge \ell(c^*)$. Therefore, $\gamma = \ell(c^*)$.
\end{proof}

\begin{remark}[Practical Implementation of Vertex Discarding]
\label{rem:practical_discarding}
The vertex discarding step (Lines 30-36) only considers vertices $z$ that were explored during the inner loop (i.e., those with finite $\delta(z)$ values). For vertices that were never reached (due to the $\gamma/2$ termination), their distance from $x$ exceeds $\gamma/2$. If $d_{\text{to\_cycle}} < \gamma/2$, such unexplored vertices automatically satisfy $d(x,z) > d_{\text{to\_cycle}}$ and do not meet the discarding criterion. This means the implementation efficiently checks only explored vertices for potential discarding.
\end{remark}


\section{Complexity Analysis and Performance Enhancement}
\label{sec:complexity}

Having presented Algorithm \ref{alg:MWC_Dijkstra} and established its correctness, we now turn to its computational complexity and discuss strategies for practical performance enhancement, particularly the graph pruning heuristic.

\subsection{Worst-Case Complexity Analysis}
\label{subsec:worst_case}

Algorithm \ref{alg:MWC_Dijkstra} consists of an outer loop iterating over $V_{\text{active}}$, which initially contains all $|V|$ vertices. The inner `while' loop executes a modified Dijkstra search.
\begin{itemize}
    \item \textbf{Inner Loop (Modified Dijkstra):} In the worst case, without the $\gamma/2$ optimization significantly curtailing the search, the inner loop resembles a standard Dijkstra execution. Using a Fibonacci heap for the priority queue (to implement the $\arg\min$ in Line 10), selecting the minimum vertex takes amortized $O(\log |V|)$ time, and edge relaxations (Lines 13-17) take amortized $O(1)$ time per edge. Thus, one inner loop iteration runs in $O(|E| + |V|\log |V|)$ time.
    \item \textbf{LCA Queries:} The cycle detection step (Line 20) requires computing the Lowest Common Ancestor (LCA) within the current shortest path tree. This check can happen up to $O(|E|)$ times per inner loop. Using efficient online LCA data structures \cite{OnlineLCA}, each query can often be answered in $O(\log |V|)$ or even faster time (approaching constant time in practice after initial setup per tree), contributing an additional $O(|E| \log |V|)$ factor per inner loop in a straightforward analysis, though this might be pessimistic as tree structures evolve.
    \item \textbf{Overall Naive Bound:} Combining these, executing the inner loop for all $|V|$ starting vertices gives a naive worst-case complexity bound of approximately $O(|V|(|E| + |V|\log |V| + |E|\log |V|))$, which simplifies to $O(|V||E|\log|V| + |V|^2\log |V|)$. Assuming the graph is connected (i.e., $|V| = O(|E|)$), this further simplifies to $O(|V||E|\log|V|)$.\footnote{Alternatively, the complexity is often stated as $O(|V||E| + |V|^2\log |V|)$ if a Fibonacci heap is used for Dijkstra's algorithm and LCA costs are considered near-constant. This is a reasonable assumption, as advanced data structures allow for $O(1)$ (constant time) LCA queries after initial preprocessing \cite{OnlineLCA}. In this scenario, the total cost of LCA operations is ``absorbed,'' meaning it is asymptotically dominated by the cost of the shortest path computations.}
\end{itemize}

However, this analysis ignores the crucial optimizations:
\begin{itemize}
    \item \textbf{Early Termination ($\gamma/2$):} The condition $\delta(v) < \gamma/2$ (Line 9) can significantly prune the search space, especially once a relatively short cycle $\gamma$ is found. In graphs with short girth, many inner loops might terminate very quickly. Quantifying this speedup analytically for the general case is difficult, as it depends heavily on the graph structure and weight distribution.
    \item \textbf{Vertex Discarding (Theorem \ref{thm:exclude_vertices}):} Algorithm~\ref{alg:MWC_Dijkstra} explicitly integrates this optimization (Lines 30-36), including the necessary condition ($d^+_{\min} < 3\gamma/2$) to ensure safety when combined with the $\gamma/2$ early termination. This requires tracking the minimum composite distance $d^+_{\min}$ and $d_{\text{to\_cycle}}$, adding $O(1)$ overhead per cycle detection. After each inner loop, the discarding step iterates over explored vertices (at most $|V|$), adding $O(|V|)$ per outer iteration. This overhead is dominated by the Dijkstra search cost. The benefit is that subsequent outer loop iterations operate on a potentially smaller vertex set $V_{\text{active}}$, reducing both the number of iterations and the cost per iteration.
\end{itemize}

\begin{wrapfigure}[23]{r}{0.4\textwidth}
\vspace{-14pt}
	\centering
	\includegraphics[clip,width=0.38\textwidth]{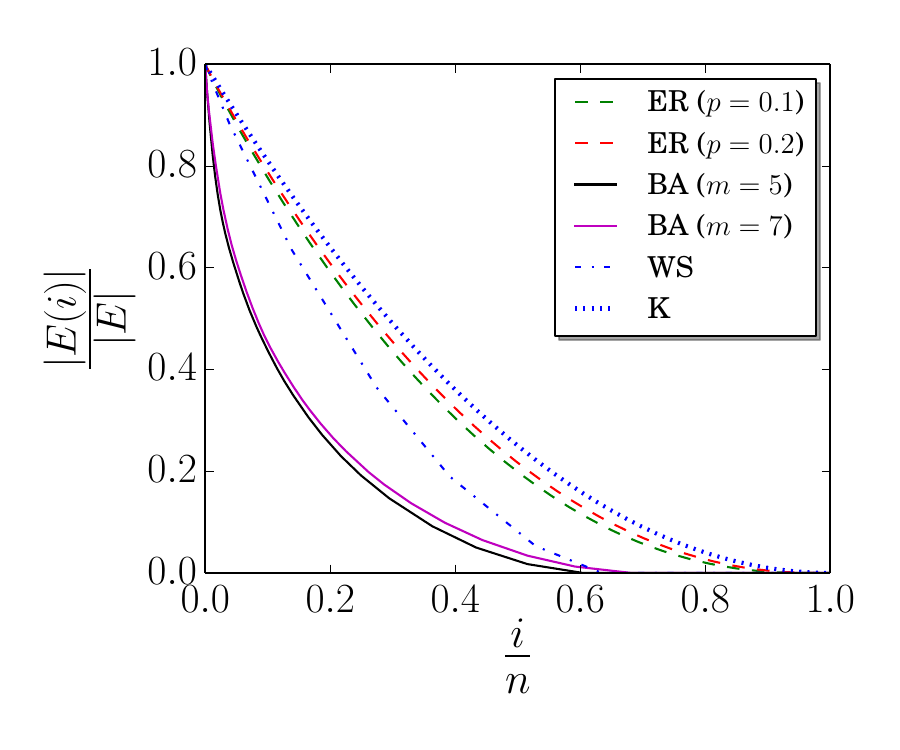}
    \vspace{-18pt}
	\caption{Fraction of the total possible edges examined ($|E_i|/|E|$) if vertices are removed one by one, sorted by degree (highest first), simulating an upper bound on the effect of vertex discarding on the worst case term $F$. Models shown: Erdős-Rényi (ER), Barabasi-Albert (BA), Watts-Strogatz (WS), Complete graph (K). The area under the curve relative to 1 indicates the potential reduction in edge processing compared to the naive $|V||E|$ term.}
    \label{fig:F_factor}
\end{wrapfigure}

Let $V_i$ and $E_i$ denote the set of active vertices and edges at the start of the $i$-th outer loop iteration (after potential removals from previous iterations). The complexity is better represented as $\sum_{i=1}^{|V|} O(|E_i| + |V_i|\log |V_i|)$, incorporating LCA costs within this bound. The effectiveness of vertex discarding determines how quickly $|V_i|$ and $|E_i|$ decrease. 

We can model the impact of vertex discarding by considering the total number of edge relaxations across all iterations. Let $F$ be the sum of degrees of all vertices over all iterations, accounting for removals:
\begin{equation}\label{eq:F_term}
F = \sum_{i=1}^{|V|} \sum_{j \in V_i} \text{deg}_{G_i}(j) \approx \sum_{i=1}^{|V|} 2|E_i|
\end{equation}
where $G_i=(V_i, E_i)$ is the graph at iteration $i$. The complexity related to edge processing then becomes roughly $O(F)$, and the part related to priority queue operations is $\sum_{i=1}^{|V|} O(|V_i|\log |V_i|)$. The overall complexity can be expressed as $O(F + |V|^2\log |V|)$ in the worst case for the priority queue over all iterations, though often $O(F + |V|\log|V| \cdot |V|)$ is used. The value of $F$ depends critically on the graph structure and the order of vertex processing (which affects discarding). As shown in Figure \ref{fig:F_factor}, $F$ can be significantly smaller than the naive bound $|V| \cdot 2|E|$ for many graph types, especially sparse ones or when high-degree vertices are processed early.
While vertex discarding offers a theoretical improvement, its practical impact varies. For further, more consistent speedups, especially when vertex discarding is not highly effective, we introduce a heuristic pruning method.

\subsection{Performance Enhancement: Graph Pruning Heuristic}
\label{subsec:pruning_heuristic_complexity}

Inspired by the locality principle and optimizations used in related iterative graph algorithms (like Loop Modulus computation in Section \ref{sec:loop_modulus_app}), we propose a heuristic graph pruning strategy to be used \textit{in conjunction} with Algorithm \ref{alg:MWC_Dijkstra}. This heuristic is distinct from the provable vertex discarding of Theorem \ref{thm:exclude_vertices}.

\textbf{Motivation:} Often, the shortest cycle, or cycles that are candidates for improving the current best $\gamma$, may be located structurally ``near'' the cycle that last updated $\gamma$. Continuously searching the entire (remaining) graph in every inner loop might be inefficient.
\textbf{Mechanism:} i) After the inner loop for vertex $x$ completes and potentially updates $\gamma$ based on a detected cycle $c'$, identify the vertices involved in $c'$. ii) For a limited number of subsequent outer loop iterations (controlled by a parameter $\texttt{pruning\_reset\_interval}$), restrict the \textit{next} inner loop's search (e.g., starting from $x+1$) to a subgraph $G_{\text{pruned}}$. iii) $G_{\text{pruned}}$ is constructed by taking vertices within a certain hop distance (parameter $\text{pruning\_distance\_threshold}$) from the vertices of $c'$ in the original graph structure. iv) The inner loop (Lines 9-28) then operates primarily within $G_{\text{pruned}}$: $\arg\min$ considers only vertices in $G_{\text{pruned}}$, and edge relaxation explores only edges induced by $G_{\text{pruned}}$. v) Crucially, after $\texttt{pruning\_reset\_interval}$ iterations, or if $\gamma$ fails to improve within the pruned view, the algorithm reverts to searching the full (remaining) graph $G_i$ to ensure global correctness.

\begin{figure}[htbp]
    \centering
    \begin{minipage}[c]{0.32\textwidth}
        \centering
        \includegraphics[width=\textwidth, trim={0 3cm .57cm 3.5cm},clip]{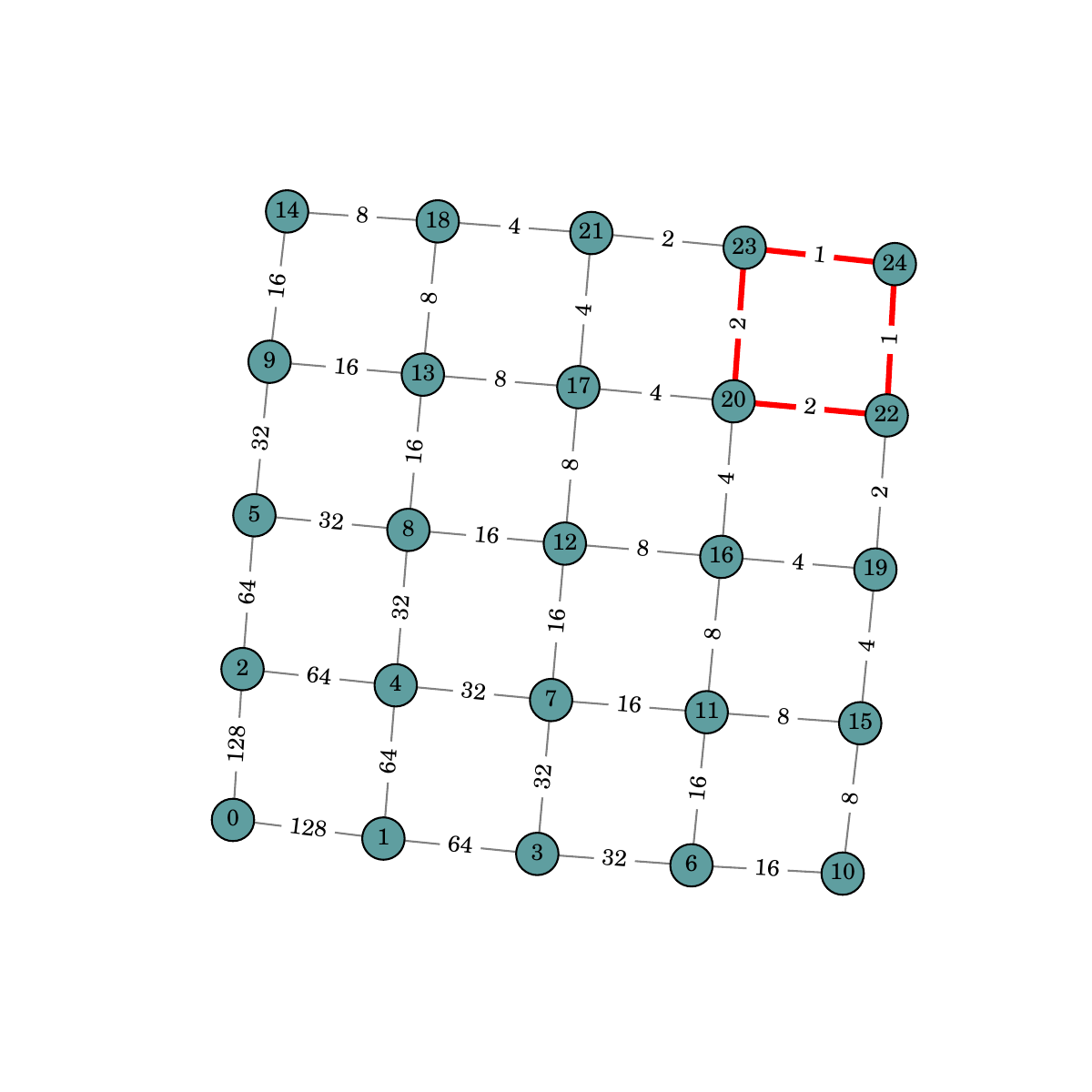}
    \end{minipage}%
    \hfill
    \begin{minipage}[c]{0.35\textwidth}
        \centering
        \includegraphics[width=\textwidth, trim={0 .5cm .8cm .85cm},clip]{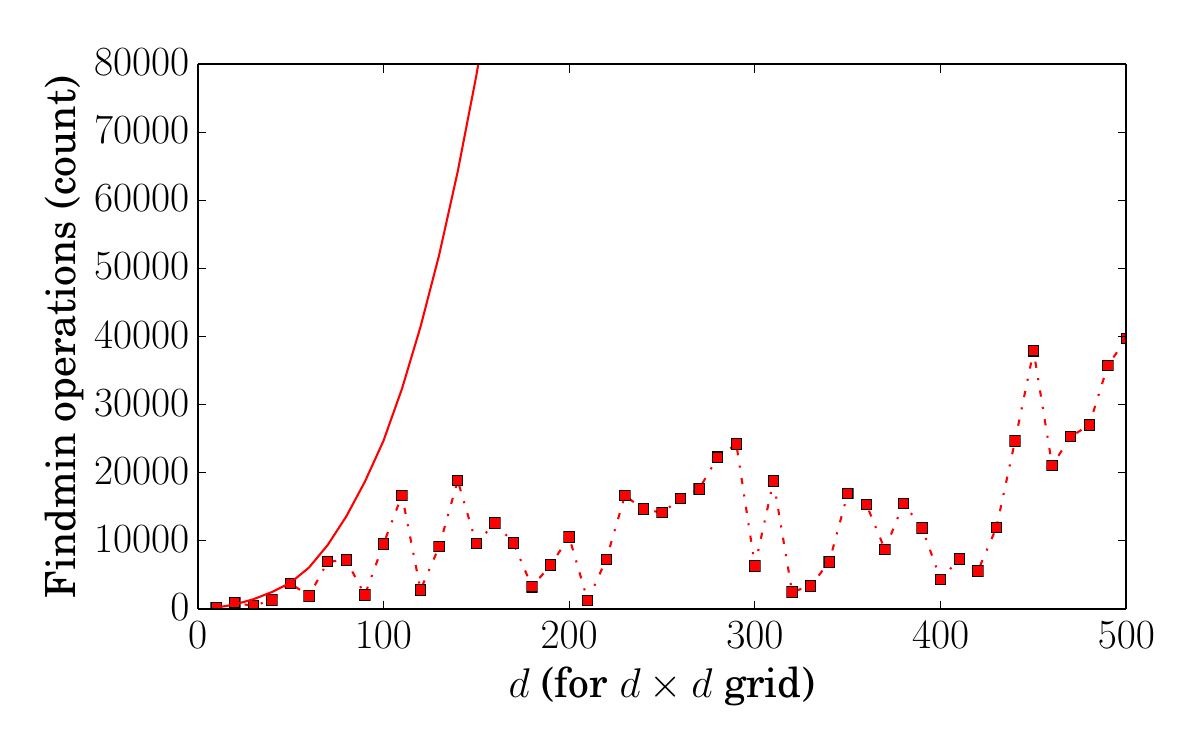}
    \end{minipage}%
    \hfill
    \begin{minipage}[c]{0.28\textwidth}
        \centering
        \includegraphics[width=\textwidth, trim={25cm 10cm 10cm 30cm},clip]{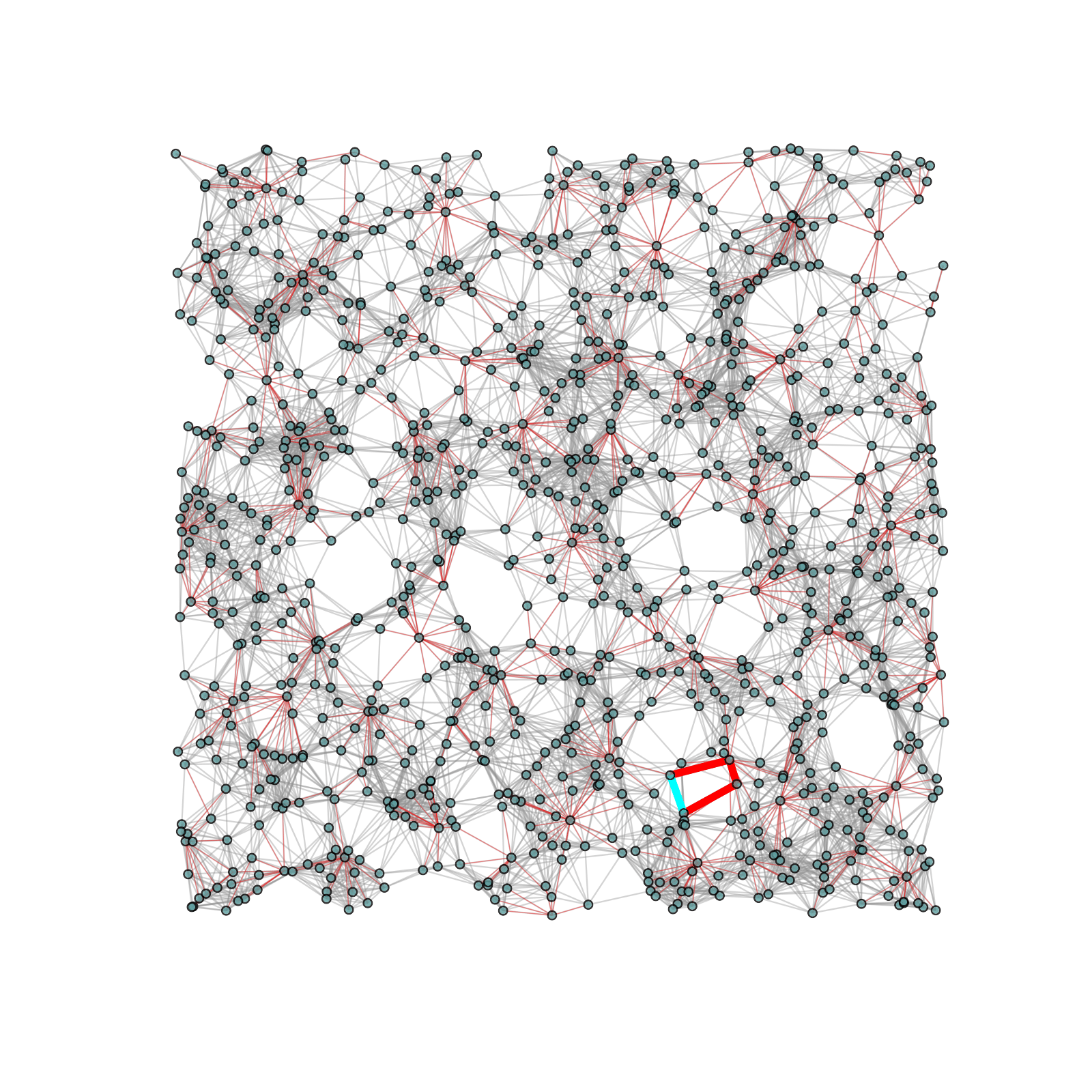}
    \end{minipage}
    \caption{(left) A $5\times 5$ weighted grid such that the MWC is localized near the highest labeled vertex. (middle) Comparison of the number of $\arg\min$ operations performed by the proposed Algorithm~\ref{alg:MWC_Dijkstra} (dashed line) versus the rooted girth algorithm (solid line) to find the MWC on $d \times d$ grids. (right) A shortest cycle (thick blue/cyan edges) formed by adding one light non-tree edge (cyan) to a light spanning tree (red edges) within a larger spatial graph.}
    \label{fig:grid}
\end{figure}

\textbf{Impact:} This heuristic does not improve the theoretical worst-case complexity (which is determined by the full graph searches during resets), but it aims to drastically reduce the \textit{empirical runtime}. By operating on a potentially much smaller $|V_{\text{pruned}}|$ and $|E_{\text{pruned}}|$ for many inner loop iterations, the cost $O(|E_{\text{pruned}}| + |V_{\text{pruned}}|\log |V_{\text{pruned}}|)$ per iteration can be significantly lower. The overhead involves the BFS to determine $G_{\text{pruned}}$, which is typically less costly than the potential savings in the Dijkstra search.
The combination of the $\gamma/2$ termination, provable vertex discarding, and the optional graph pruning heuristic makes Algorithm \ref{alg:MWC_Dijkstra} significantly faster in practice than naive MWC algorithms, as illustrated by the following examples.

\subsection{Comparison Point: The Rooted Girth Algorithm}
\label{subsec:rooted_girth}

To contextualize the performance of Algorithm \ref{alg:MWC_Dijkstra}, we compare it to the standard Rooted Girth Algorithm introduced in Section \ref{sec:intro_framework}. Recall that this approach requires $|E|$ shortest path computations by iterating through all edges $e = (u,v) \in E$ and computing $d_{G\setminus e}(u,v)$.

While conceptually simple, this approach is generally significantly less efficient than Algorithm \ref{alg:MWC_Dijkstra}. The Rooted Girth algorithm must execute all $|E|$ computations, whereas the optimizations in Algorithm \ref{alg:MWC_Dijkstra} (particularly the $\gamma/2$ cutoff and vertex discarding) frequently allow many searches to terminate early, leading to substantial empirical speedups. This difference is particularly pronounced on graphs where the MWC is long or involves vertices far from each other.

\subsection{Example 1: Grid Graph with a Localized Short Cycle}
\label{subsec:example_grid}

Consider a $d \times d$ grid graph, where vertices are labeled starting from $0$ at $[0,0]$ via BFS. We assign edge weights exponentially increasing with distance from the highest-labeled vertex $v_{last}$ (at $[d-1, d-1]$): edges incident to $v_{last}$ have weight $2^0=1$, edges incident to vertices at hop distance $h$ from $v_{last}$ have weight $2^h$. An example $5 \times 5$ grid is shown in Figure \ref{fig:grid} (left).

The unique MWC in this construction is the square of weight $1+1+2+2 = 6$ incident to $v_{last}$. For the rooted girth algorithm starting searches from low-labeled vertices, reaching this cycle requires exploring a large portion of the graph. Algorithm \ref{alg:MWC_Dijkstra}, however, benefits from its structure. While starting from $x=0$ might be slow, once an outer loop starts from a vertex $x$ closer to the MWC, the $\gamma/2$ condition and potential vertex discarding can accelerate finding and verifying the MWC. Figure \ref{fig:grid} (middle) empirically compares the number of $\arg\min$ operations (a proxy for Dijkstra cost) required by both approaches, showing Algorithm \ref{alg:MWC_Dijkstra} (dashed line) requires significantly fewer operations, especially for larger grids, finding the MWC much faster (often within processing just a few high-labeled vertices).

\subsection{Example 2: Light Cycle on a Spanning Tree}
\label{subsec:example_spanning_tree}

Consider a base graph (e.g., a random geometric graph) where edge weights are initially large. We find an arbitrary spanning tree $T$ and re-weight all edges $e \in E(T)$ to $w(e)=1$. Then, we select one non-tree edge $e_{NT} = (u,v) \notin E(T)$ and set its weight $w(e_{NT})=1$. All other non-tree edges retain their large weights.

The unique MWC in this graph is formed by the edge $e_{NT}$ plus the unique path between $u$ and $v$ within the spanning tree $T$. Finding this cycle using the rooted girth algorithm can be inefficient, as it must test $|E|$ edges, many of which belong only to very long cycles. Algorithm \ref{alg:MWC_Dijkstra}, when run starting from any vertex $x$, will explore the low-weight spanning tree edges efficiently. When the Dijkstra search expands across the edge $e_{NT}$, it will likely quickly detect the short cycle involving the tree path (Figure \ref{fig:grid}(right)). The $\gamma/2$ condition will then likely terminate subsequent searches rapidly. This structure highlights cases where Algorithm \ref{alg:MWC_Dijkstra} can significantly outperform edge-rooted approaches.

\section{Application: Accelerating Loop Modulus Computation}
\label{sec:loop_modulus_app}

A fundamental problem in network analysis is quantifying the ``richness'' of cyclic structures within a graph. The $p$-Modulus of a family of loops $L$ provides such a measure, analogous to concepts in complex analysis \cite{albin2016minimal}. For $p=2$, which offers computational advantages and a useful probabilistic interpretation, the modulus is defined via the quadratic programming problem (QP) \cite{shakeri2017network}:

\begin{equation}
\label{eq:modulus_primal}
\text{Mod}_2(L) = \min_{\rho \ge 0} \sum_{e \in E} \rho(e)^2 \quad \text{subject to} \quad \sum_{e \in \gamma} \rho(e) \ge 1 \quad \forall \gamma \in L
\end{equation}
Here, $\rho: E \to \mathbb{R}_{\ge 0}$ is a density function on the edges. The optimal density $\rho^*$ minimizing this quadratic energy function subject to the constraints (where each simple loop $\gamma$ must have a total $\rho$-length of at least 1) provides valuable information. Specifically, $\rho^*(e)$ can be interpreted as proportional to the expected usage of edge $e$ by ``important'' loops within the family $L$. The final $\text{Mod}_2(L)$ value quantifies the overall cycle richness, balancing the number and length of loops against their overlap. This measure and the resulting $\rho^*$ densities have applications in network clustering, community detection, and understanding network robustness \cite{shakeri2017network}.

\subsection{The Iterative Modulus Algorithm and its Bottleneck}
\label{subsec:modulus_algo_bottleneck}

Directly solving the primal problem \eqref{eq:modulus_primal} is intractable due to the potentially enormous number of simple loops $L$. Practical algorithms, summarized in Algorithm \ref{alg:loop_mod}, employ an iterative approach based on constraint generation \cite{shakeri2017network}:
\begin{enumerate}
    \item \textbf{Initialization (Preprocessing):} Start with an empty or small initial set of loop constraints $L'$ (e.g., triangles found via heuristics or a shortest hop cycle, Lines 1-9).
    \item \textbf{Initial QP Solve:} Solve the modulus QP problem restricted to $L'$ to obtain an initial density $\rho^{(0)}$ and warm-start variables (Lines 10-14).
    \item \textbf{Constraint Finding (Bottleneck):} In iteration $k \ge 0$, find one or more simple cycles $\gamma$ \textit{not} currently in $L'$ that violate the constraints for the \textit{current} density $\rho^{(k)}$. That is, find $\gamma$ such that its $\rho$-length satisfies:
    \begin{equation}
    \label{eq:rho_length_violation}
    l_\rho(\gamma) = \sum_{e \in \gamma} \rho^{(k)}(e) < 1.0 - \epsilon_{\text{tol}}
    \end{equation}
    where $\epsilon_{\text{tol}}$ is a small positive tolerance. The cycle(s) with the \textit{minimum} $l_\rho$ are the ``most violated'' (Line 28, implemented by `FindTopKViolatedCycles`).
    \item \textbf{Add Constraint(s):} Add the found new unique violating cycle(s) to the set $L'$. If no new violating cycles are found, convergence is reached.
    \item \textbf{Re-solve QP:} Solve the QP with the updated constraint set $L'$, using warm-starting from the previous solution to accelerate the solver (Lines 41-44). Update $\rho^{(k+1)}$.
    \item \textbf{Iteration:} Repeat steps 3-5 until convergence or a maximum iteration limit is reached (Outer `while' loop, Line 16).
\end{enumerate}
The critical bottleneck in this process is Step 3, the \textbf{Constraint Finding}. This step requires searching the graph $G$ with edge weights defined by the \textit{current} density $\rho^{(k)}$ (which changes every iteration) to find the simple cycle(s) with the smallest $\rho$-length. Standard approaches often involve variants of repeated Dijkstra searches or related techniques, which can be computationally prohibitive \cite{shakeri2017network}.

\subsection{Using Algorithm \ref{alg:MWC_Dijkstra} for Efficient Constraint Finding}
\label{subsec:algo1_for_constraints}

The constraint-finding step (Step 3 above) is fundamentally a search for a minimum weight simple cycle (or top-k shortest cycles), where the edge weights are given by the dynamic density $\rho^{(k)}$. Algorithms specifically designed for finding the MWC, such as Algorithm \ref{alg:MWC_Dijkstra} presented in Section \ref{sec_alg} or others employing efficient techniques \cite{itai1978finding, orlin2016nm}, are prime candidates to implement the \texttt{FindTopKViolatedCycles} function.

In iteration $k$ of the modulus algorithm, the chosen MWC algorithm is executed using the current density $\rho^{(k)}$ as the edge weights $w(e) = \rho^{(k)}(e)$. Its goal is to efficiently identify at least one, and preferably up to $k_{\text{add}}$, simple cycles $\gamma$ satisfying Equation \eqref{eq:rho_length_violation}. The efficiency stems from exploiting the structure of the MWC problem itself, potentially avoiding the exhaustive nature of simpler APSP-based methods. For instance, Algorithm \ref{alg:MWC_Dijkstra}'s $\gamma/2$ pruning condition (adapted to use $1.0 - \epsilon_{\text{tol}}$ perhaps, or simply running until the first few cycles shorter than the threshold are found) can significantly limit the search effort required within the \texttt{FindTopKViolatedCycles} call.

\paragraph{Efficiency gains in iterative context.}  In the conventional baseline approach (e.g., using the Rooted Girth Algorithm), each Loop-Modulus iteration requires $|E|$ independent shortest-path searches, leading to a high per-iteration cost. Our MWC algorithm (Algorithm~\ref{alg:MWC_Dijkstra}) significantly reduces this per-iteration cost. Although it must be re-run each time the weights $\rho^{(k)}$ change, its internal optimizations (the $\gamma/2$ cutoff, vertex discarding, and graph pruning) often allow it to find the most violated cycle much faster than the baseline approach—typically one to two orders of magnitude faster. This substantial speedup in the constraint-finding bottleneck drastically reduces the overall runtime of the Loop Modulus computation. Further engineering accelerations such as BMSSP recursion \cite{duan2025breaking} or faster heap implementations can provide additional gains.

\subsection{Performance Enhancement: The Graph Pruning Heuristic}
\label{subsec:pruning_heuristic}

Even with an optimized MWC algorithm, searching the entire graph in every iteration can be redundant. Algorithm \ref{alg:loop_mod} incorporates an optional graph pruning heuristic (controlled by $P_{\text{use}}$) to focus the search:

\begin{itemize}
    \item \textbf{Motivation:} The most violated cycle(s) in iteration $k+1$ might often be structurally ``close" to the violating cycle(s) added in iteration $k$, as the $\rho$ adjustments primarily occur along those paths.
    \item \textbf{Mechanism (Lines 45-52):}
        \begin{enumerate}
            \item \textit{Trigger:} After solving the QP and identifying the nodes $V_{\text{last\_added}}$ involved in the newly added cycles $L'_{\text{new}}$ (Line 40), and if pruning is enabled ($P_{\text{use}}$) and the algorithm is not currently in a reset phase ($k_{\text{prune\_steps}} == 0$).
            \item \textit{Expand Region:} Perform a BFS starting from $V_{\text{last\_added}}$ on the original graph $G$ to find all nodes $V_{\text{keep}}$ within $P_{\text{dist}}$ hops (Line 46).
            \item \textit{Create/Check View:} Attempt to create a subgraph $G_{\text{view}} = G[V_{\text{keep}}]$ (Line 50). Check if the pruning is too aggressive (e.g., $|V_{\text{keep}}| < 0.3 |V|$). If yes, force a reset by setting $G_{\text{view}} \gets G$ and ensuring the next iteration searches the full graph (Line 48). Otherwise, set $G_{\text{view}}$ to the pruned subgraph and start the prune counter $k_{\text{prune\_steps}} \gets 1$ (Line 51).
            \item \textit{Localized Search (Lines 19-26):} In subsequent iterations, if pruning is active ($k_{\text{prune\_steps}} < P_{\text{interval}}$), the `FindTopKViolatedCycles' function (Line 28) is called with $G_{\text{search}} = G_{\text{view}}$. The MWC algorithm operates within this view, using global $\rho^{(k)}$ weights and validating found cycles against the full graph $G$.
            \item \textit{Reset:} The search automatically reverts to the full graph $G_{\text{search}} = G$ when $k_{\text{prune\_steps}}$ reaches $P_{\text{interval}}$ (Line 24), or if the pruning was deemed too aggressive (Line 48).
        \end{enumerate}
\end{itemize}

\begin{figure*}[htbp!]
    \centering
        \begin{algorithm}[H]
            \caption{Compact Iterative Loop Modulus Calculation (p=2)}
            \label{alg:loop_mod}
            \begin{algorithmic}[1]
            \algnotext{EndFor}
            \algnotext{EndIf}
            \algnotext{EndWhile}
                \Require Graph $G=(V, E)$, tol $\epsilon$, max iter $K$, init target $N_{\text{tgt}}$, cycles/iter $k_{\text{add}}$, prune params $P_{\text{use}}, P_{\text{int}}, P_{\text{dist}}$
                \Ensure Optimal density $\rho^*$, Modulus $M$

                \Statex \textit{// Phase 1: Preprocessing}
                \State $L_{\text{cand}} \gets \Call{FindTriangles}{G}$ \textbf{or} $\{\Call{FindShortestHopCycle}{G}\}$ \textbf{if} none
                \If{$L_{\text{cand}}$ is empty} \Return $\rho \gets \mathbf{0}$, $M \gets 0$ \EndIf
                \State Score $L_{\text{cand}}$ (e.g., edge centrality)
                \State $L' \gets \Call{GreedySelect}{L_{\text{cand}}, N_{\text{tgt}}}$
                \State $\rho \gets \mathbf{0}$; $k \gets 0$; $M \gets 0$; $k_{\text{QP}} \gets 0$
                \State $G_{\text{view}} \gets G$; $k_{\text{prune}} \gets 0$; $V_{\text{last}} \gets \emptyset$

                \Statex \textit{// Phase 2: Initial QP Solve}
                \State $N \gets \Call{BuildConstraintMatrix}{L'}$
                \State $\rho, \text{w\_x}, \text{w\_y} \gets \Call{SolveQP}{N, \text{None}}$ ; $M \gets \sum \rho^2$ ; $k_{\text{QP}} \gets 1$

                \Statex \textit{// Phase 3: Iterative Constraint Addition}
                \While{$k < K$}
                    \State $k \gets k + 1$
                    \If{$P_{\text{use}}$ \textbf{and} $k_{\text{prune}} < P_{\text{int}}$ \textbf{and} $G_{\text{view}} \neq G$} \label{line:prune_check_compact} \Comment{Use pruned view?}
                         \State $G_{\text{search}} \gets G_{\text{view}}$ ; $k_{\text{prune}} \gets k_{\text{prune}} + 1$
                    \Else \Comment{Use full graph / Reset}
                        \State $G_{\text{search}} \gets G$ ; $k_{\text{prune}} \gets 0$ ; $G_{\text{view}} \gets G$ \label{line:prune_logic_compact}
                    \EndIf

                    \State $\rho_{\text{map}} \gets \text{dict}(E \to \rho(e))$ \Comment{Find violating cycles}
                    \State $L_{\text{viol}} \gets \Call{FindTopKViolatedCycles}{G_{\text{search}}, G, \rho_{\text{map}}, k_{\text{add}}, 1 - \epsilon}$ \label{line:find_cycles_compact}
                    \If{$L_{\text{viol}}$ is empty} \textbf{break} \EndIf \Comment{Converged}

                    \State $L'_{\text{new}} \gets \{\gamma \mid (\gamma, l_\rho) \in L_{\text{viol}} \text{ and } \gamma \notin L'\}$ \Comment{Collect new unique constraints}
                    \If{$L'_{\text{new}}$ is empty} \textbf{break} \EndIf \Comment{No new violations found}
                    \State $L' \gets L' \cup L'_{\text{new}}$ ; $V_{\text{last}} \gets \text{nodes in } L'_{\text{new}}$

                    \State $N \gets \Call{BuildConstraintMatrix}{L'}$ \Comment{Re-solve QP}
                    \State $\rho, \text{w\_x}, \text{w\_y} \gets \Call{SolveQP}{N, (\text{w\_x}, \text{w\_y})}$
                    \State $M \gets \sum \rho^2$ ; $k_{\text{QP}} \gets k_{\text{QP}} + 1$

                    \If{$P_{\text{use}}$ \textbf{and} $k_{\text{prune}} == 0$ \textbf{and} $L'_{\text{new}}$ is not empty} \label{line:prune_update_compact} \Comment{Update pruning state}
                         \State $V_{\text{keep}} \gets \Call{BFSFromNodes}{G, V_{\text{last}}, P_{\text{dist}}}$
                         \If{$|V_{\text{keep}}| < 0.3 |V|$} $G_{\text{view}} \gets G$ ; $k_{\text{prune}} \gets P_{\text{int}}$ \Comment{Check if prune too aggressive}
                         \Else $G_{\text{view}} \gets G[V_{\text{keep}}]$ ; $k_{\text{prune}} \gets 1$
                         \EndIf
                    \EndIf
                \EndWhile
                \State $\rho^* \gets \rho$
                \State \Return $\rho^*$, $M$, $k_{\text{QP}}$
            \end{algorithmic}
        \end{algorithm}
\end{figure*}

\subsection{Correctness and Performance Impact}
\label{subsec:correctness_performance}

\begin{itemize}
    \item \textbf{Correctness:} The pruning heuristic does not compromise the theoretical correctness of the overall loop modulus algorithm (Algorithm \ref{alg:loop_mod}). The key is the \textit{periodic reset mechanism}. Even if the globally most violating cycle lies outside the pruned view $G_{\text{view}}$, the algorithm will eventually revert to searching the full graph $G$, guaranteeing that any violations necessary for convergence will be found.
    \item \textbf{Performance:} The primary benefit is empirical speedup. By restricting the MWC search (the implementation of `FindTopKViolatedCycles`) to the potentially much smaller $G_{\text{view}}$ for multiple iterations, the computational cost of the bottleneck step is significantly reduced. While the BFS for pruning (Line 46) adds overhead, it is often negligible compared to the savings achieved by running the MWC search on a smaller graph, especially when the MWC algorithm's complexity scales super-linearly with graph size.
\end{itemize}

\subsection{Example Performance: Cholera Dataset}
\label{subsec:cholera_example}

To illustrate the potential benefit, consider the application of an optimized iterative modulus algorithm to a real-world network dataset derived from the 1854 Broad Street cholera outbreak in London \cite{snow1856mode}. The graph (Figure \ref{fig:cholera_results_viz}(left)) can be constructed, for instance, using Delaunay triangulation based on the locations of cholera cases, representing geographic proximity relevant to potential water source usage \cite{networkx_delaunay}.
Applying an optimized algorithm (using OSQP, warm-starting, batch constraint addition, and efficient constraint finding incorporating principles similar to Algorithm 1 with pruning) yields significant performance improvements compared to less optimized approaches, as shown in Table \ref{tab:cholera_results}. The $\rho^*$ values obtained highlight edges connecting areas that frequently participate in loops, potentially indicating regions strongly associated with shared water sources (Figure \ref{fig:cholera_results_viz}).

\begin{table}[ht]
 \centering
 \begin{tabular}{|l | r | c | r | l |}
 \hline
 \textbf{Method} & \textbf{QP Solves} & \textbf{Total Time (s)} & \textbf{Final Modulus} & \textbf{Total Constraints}  \\ \hline 
 Our proposed algorithm & \textbf{28} & \textbf{3.8} & \textbf{100.8} & \textbf{248} \\ \hline
 Baseline Dijkstra & 475 & 700 & 99.1 & 475 \\ \hline
 \end{tabular}
 \caption{Illustrative performance comparison for Loop Modulus calculation on the Cholera dataset graph ($\sim$324 nodes, $\sim$941 edges).}
 \label{tab:cholera_results}
\end{table}

\begin{figure}
 \centering
    \begin{minipage}[c]{0.25\textwidth}
        \centering
        \includegraphics[width=\textwidth]{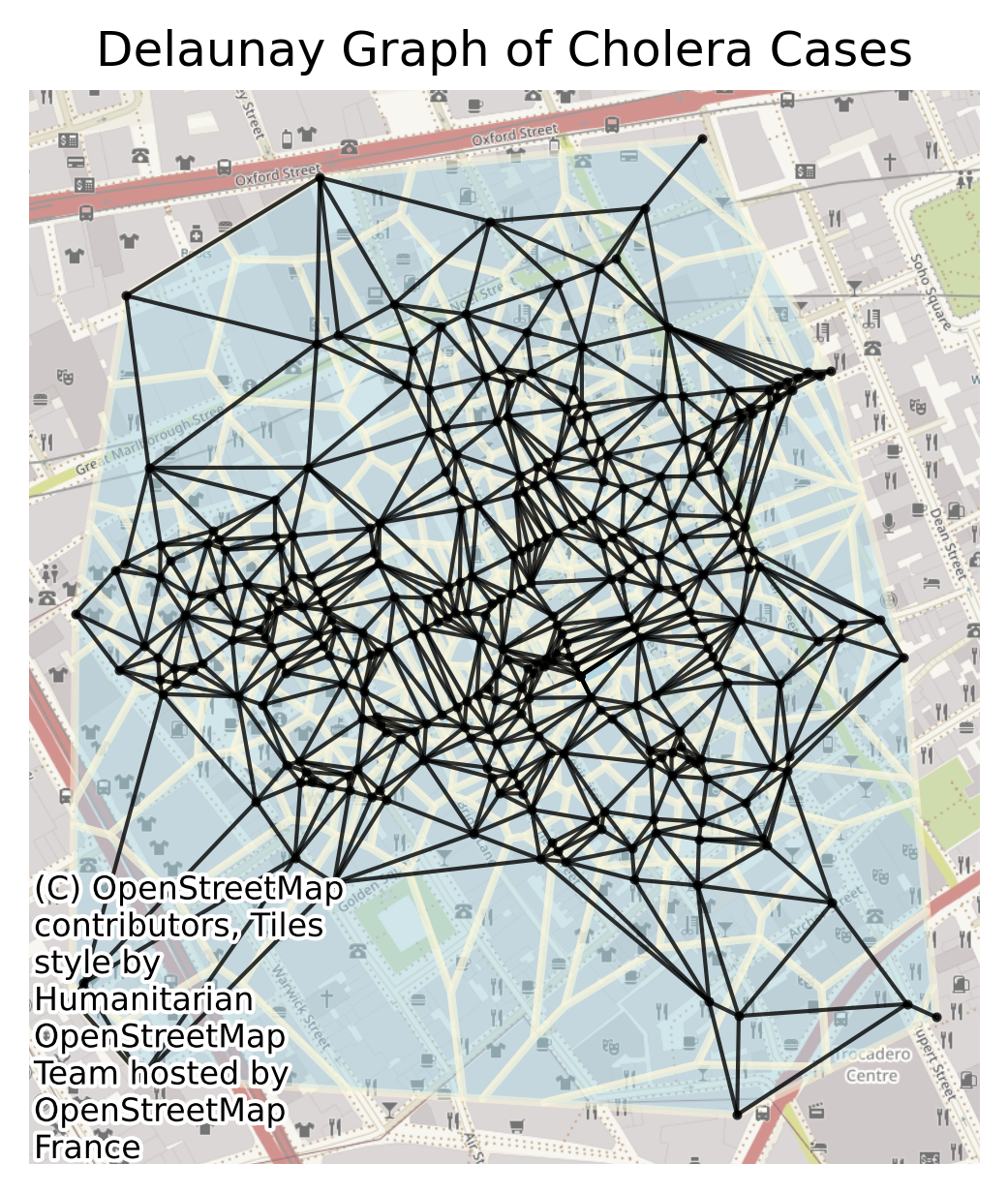}
    \end{minipage}%
    \hfill
    \begin{minipage}[c]{0.75\textwidth}
        \centering
        \includegraphics[width=\textwidth, trim={0 0 .57cm 2cm},clip]{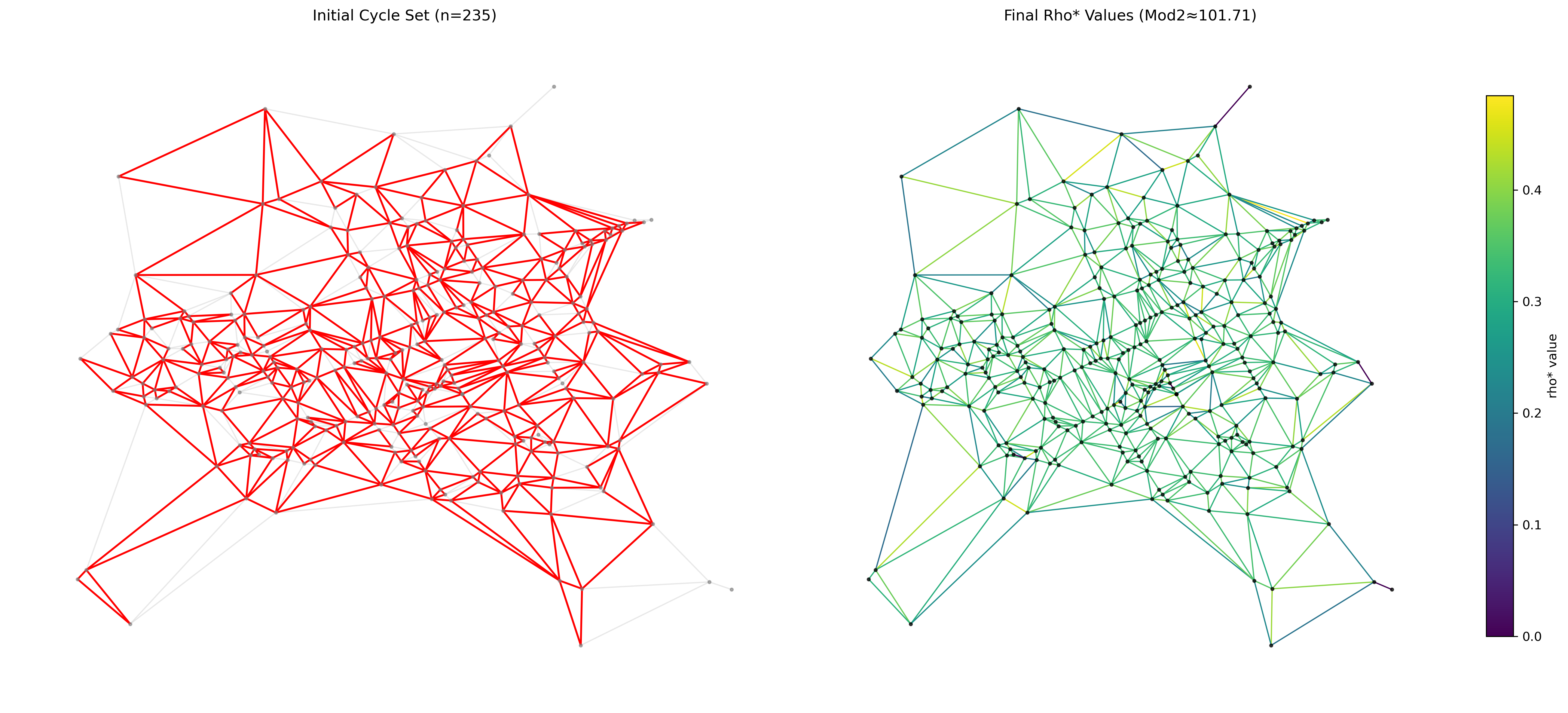}
    \end{minipage}
 \caption{(left) Graph structure derived from the 1854 Cholera outbreak data using Delaunay triangulation of case locations \cite{networkx_delaunay}. (middle) Visualization of Loop Modulus results ($\rho^*$ edge densities) on the Cholera graph. (right) Edge thickness/color intensity corresponds to higher $\rho^*$ values, highlighting edges frequently part of important loops.}
 \label{fig:cholera_results_viz}
\end{figure}

The results demonstrate a dramatic reduction in both the number of QP solves and the total runtime. This highlights the significant practical advantage gained by optimizing the constraint-finding bottleneck, where techniques derived from efficient MWC algorithms like Algorithm 1, coupled with heuristics like graph pruning, play a crucial role.

\section{Conclusions}
\label{sec:conclusion}

In this paper, we introduced a novel algorithm for identifying the Minimum Weight Cycle (MWC) in weighted graphs, a fundamental problem in graph theory with wide-ranging applications in network analysis and optimization. Our approach redefines the MWC search by minimizing a \textit{composite distance metric}, which integrates the shortest path distance from a vertex to a cycle with the cycle's own length. This transforms the traditionally global cycle search into an efficient, iterative, node-centric optimization process, drawing inspiration from Dijkstra's algorithm. We have substantiated the algorithm's correctness through rigorous proofs grounded in loop invariants, ensuring its reliability across diverse graph structures.
To enhance computational efficiency, we incorporated two key optimizations:
\begin{itemize}
    \item Node Discarding Technique: Leveraging intermediate results, this method reduces the search space by safely excluding vertices that cannot belong to the MWC, as supported by a formal theorem. The integration with the $\gamma/2$ early termination requires an additional safeguard ($d^+_{\min} < 3\gamma/2$) to ensure correctness.
    \item Graph Pruning Heuristic: This dynamic strategy focuses the search on relevant subgraphs, exploiting the locality principle prevalent in complex networks to achieve significant empirical speedups, while periodic resets preserve global optimality.
\end{itemize}
These optimizations contribute to the algorithm's practical efficiency, achieving a complexity of \(\mathcal{O}(|V|^2 \log |V| + F)\), where \(F\) is a graph-structure-dependent factor that can range from \(\mathcal{O}(nm)\) in dense or challenging cases to \(\mathcal{O}(1)\) in highly favorable scenarios, such as sparse graphs with effective pruning. We illustrate the utility of the algorithm by integrating it into the iterative constraint-finding process of Loop Modulus computation, where it substantially reduces runtime, as demonstrated in a case study using the Cholera dataset. This practical utility underscores the algorithm's value as a core primitive for advanced network analysis tasks, including clustering, community detection, and robustness assessment.

Our contributions advance the toolkit available for mining cyclic graph topologies by offering a fast, reliable, and theoretically sound solution. The composite distance approach not only improves upon traditional methods that exhaustively explore all cycles or edges but also adapts efficiently to real-world network structures. Looking ahead, potential extensions could include:
\begin{itemize}
    \item Parallelization: Distributing the vertex-centric searches across multiple processors to further reduce runtime.
    \item Dynamic Graphs: Adapting the algorithm to handle graphs with evolving edge weights.
    \item Directed or Negative-Weighted Graphs: Expanding the framework to directed graphs or those with negative weights (assuming no negative cycles), broadening its applicability.
\end{itemize}
As a next step to improve our algorithm for finding minimum weight cycles, we propose exploring future randomized versions inspired by probabilistic methods from graph theory. Drawing on the work of Albin and Poggi-Corradini \cite{albin2016minimal}, randomized sampling techniques based on the probabilistic interpretation of modulus could approximate the composite distance or edge importance, potentially yielding faster algorithms with provable approximation guarantees. Similarly, adapting the Renewal Non-Backtracking Random Walk \cite{moradi2021new} could enable efficient cycle sampling by prioritizing edges with high retracing probabilities. These randomized approaches promise enhanced scalability and efficiency, particularly for large-scale networks, building on the theoretical and practical insights from these studies.

\bibliographystyle{unsrtnat}
\bibliography{refs}

\end{document}